
\documentclass[a4paper,10pt]{article}
\usepackage[utf8]{inputenc}
\usepackage{geometry} 
\usepackage{graphicx}
\usepackage[font={small},width=0.55\textwidth]{caption}
\usepackage{lipsum}
\usepackage[usenames, dvipsnames]{color}
\usepackage{fancyhdr}
\usepackage{mathtools}
\usepackage{float}
\usepackage{blindtext}
\usepackage{mathptmx}
\usepackage{enumitem}
\usepackage{amsthm}
\usepackage{dsfont}
\usepackage{eucal}
\usepackage{stfloats}
\usepackage{titlesec}
\usepackage{adjustbox}
\usepackage{microtype}

\titleformat{\section}
{\normalfont\Large}{\thesection}{1em}{}
\titleformat{\subsection}
{\normalfont\large}{\thesubsection}{1em}{}

\usepackage[ruled,vlined]{algorithm2e}
\usepackage{authblk}
\usepackage{amsmath,amsfonts,amssymb}

\usepackage[backend = biber, style=numeric-comp,isbn=false,url=false,doi=false,eprint=false,sorting=none]{biblatex}
\renewbibmacro{in:}{}

\DeclareCiteCommand{\tabcite}
  {\usebibmacro{cite:init}%
   \usebibmacro{prenote}}
  {\usebibmacro{citeindex}%
   \usebibmacro{cite:comp}}
  {}
  {\usebibmacro{cite:dump}%
   \usebibmacro{postnote}}

\usepackage{array}
\newcolumntype{P}[1]{>{\centering\arraybackslash}p{#1}}

\addbibresource{hardlimits.bib}


\newtheorem{theorem}{Theorem}
\newtheorem{lemma}[theorem]{Lemma}

\newcommand{\ket}[1]{\ensuremath{|#1\mkern-1mu\rangle}}
\newcommand{\dyad}[1]{\ensuremath{|#1\rangle\mkern-3mu\langle #1|}}
\newcommand{\ddyad}[2]{\ensuremath{|#1\rangle\mkern-3mu\langle #2|}}
\newcommand{\braket}[1]{\ensuremath{#1}}

\DeclarePairedDelimiter\ceil{\lceil}{\rceil}

\usepackage{xspace}
\def\cz{\ensuremath{\text{CZ}}\xspace}
\def\czlo{\ensuremath{\text{CZ}^{\mathrm{LO}}}\xspace}
\def\rpeg{\ensuremath{\text{R-PEG}}\xspace}

\newcommand{\new}[1]{\textcolor{Black}{#1}}

\pagestyle{fancy}
\fancyhf{}

\voffset =  -22 pt
\hoffset =  -48 pt

\textheight =  715 pt
\textwidth  =  514 pt


\title{Hard limits on the postselectability of optical graph states}

\author[1]{Jeremy C. Adcock}
\author[1]{Sam Morley-Short}
\author[1]{Joshua W. Silverstone\thanks{josh.silverstone@bristol.ac.uk}}
\author[1]{Mark G.  Thompson}
\affil[1]{QETLabs, H. H. Wills Physics Laboratory \& Department of Electrical and Electronic Engineering, University of Bristol, Merchant Venturers Building, Woodland Road, Bristol BS8 1UB, UK}

\date{June 2018}

\setcounter{Maxaffil}{0}

\lhead{J.C. Adcock et al.}

\chead{Hard limits on the postselectability of optical graph states}

\rhead{\thepage}
\setlength{\headwidth}{\textwidth}

\begin{document}
\maketitle
\thispagestyle{fancy}

\textbf{
Coherent control of large entangled graph states enables a wide variety of quantum information processing tasks, including error-corrected quantum computation. 
The linear optical approach offers excellent control and coherence, but today most photon sources and entangling gates---required for the construction of large graph states---are probabilistic and rely on postselection.
In this work, we provide proofs and heuristics to aid experimental design using postselection.
\new{We introduce a versatile design rule for postselectable experiments: drawn as a graph, with qubit modes as vertices and gates and photon-pair sources as edges, an experiment may only contain cycles with an \emph{odd} number of sources.}
We analyse experiments that use photons from postselected photon-pair sources, and lower bound the number of accessible classes of graph state entanglement in the non-degenerate case---graph state entanglement classes that contain a tree are are always accessible.
The proportion of graph states accessible by postselection shrinks rapidly, however. We list accessible classes for various resource states up to 9 qubits.
Finally, we apply these methods to near-term multi-photon experiments.
}

\section{Introduction}

\new{Postselection---whereby the success of a probabilistic photonic process is confirmed only after all the photons have been measured---}has been a testbed for fundamental quantum phenomena since its inception \cite{freedman1972experimental, sychev2017enlargement, peruzzo2012quantum, kwiat1995experimental, aspect1982experimental,bouwmeester1999observation} and although it has been shown that large-scale linear-optical quantum computing is possible in principle, it requires mid-computation measurement and feed-forward \cite{knill2001scheme, gimeno2015three, morley2017physical, pant2017percolation}. Modern schemes rely on the generation of large entangled states, on which measurement-based quantum computation is performed \cite{raussendorf2001one}. Integrated quantum photonics \cite{silverstone2016silicon, crespi2016suppression,  wang2018multidimensional} is one exciting route to large-scale applications, but it requires the on-chip generation of many-qubit quantum states.

\new{Graph states illuminate the entanglement in measurement-based protocols \cite{hein2006entanglement}, finding wide usage throughout  quantum information: from measurement-based quantum computing \cite{raussendorf2001one}, to error correcting codes \cite{schlingemann2001quantum} and quantum communications \cite{markham2008graph}. Different graph states are suited to different applications, so the ability to generate a variety of graph states is powerful. In photonic systems, where quantum gates are normally performed on fixed hardware, measurement-based protocols (based on graph states) use only the choice of measurement basis to provide true programmability.}

Photons are notoriously difficult to both produce and interact on-demand. This has led to slow improvements in photon number \cite{bouwmeester1999observation, pan2001experimental, walther2005experimental, lu2007experimental, yao2012observation, wang2016experimental}. Today, the most common way to produce quantum states of light is via what we will refer to as entangled postselected pair (EPP) sources, such those based on parametric down-conversion or spontaneous four-wave mixing. These processes can produce pair-wise entanglement and have been ubiquitous in photonic quantum information experiments over the last thirty years \cite{shalm2015strong, giustina2015significant, hong1987measurement, aspect1982experimental}. 

Large entangled states remain a challenge, however. Two common ways to generate entanglement in linear optics are: the postselected controlled-$Z$ (\cz) gate \cite{hofmann2002quantum, ralph2002linear}; and the postselected fusion gate \cite{browne2005resource}. So far, up to ten photons have been entangled in this way \cite{wang2016experimental}, though ambitious large-scale proposals exist \cite{bodiya2006scalable, lin2011weaving}. References \cite{krenn2017quantum} and \cite{gu2018quantum} use graph theory to determine which high-dimensional Greenberger-Horne-Zeilinger (GHZ) state is produced by any configuration of EPP sources.

Probabilistic gates are provably the only way to generate entanglement using linear optics \cite{knill2001scheme}, and are a central component of modern linear-optical quantum computing schemes \cite{gimeno2015three, morley2017physical, li2015resource, pant2017percolation}. Currently, postselected entangling gates (PEGs) are the only way to test linear optical devices and techniques in the multi-photon regime. The complexity of generating graph-states using linear optics has not yet been analysed \cite{mhalla2004complexity, cabello2011optimal}. 

Here, we analyse connected graph states of ``dual-rail'' photonic qubits and show that postselected gates, as well as having exponential time complexity, are fundamentally limited in which types of entanglement they can produce. This limitation, combined with the result of ref. \cite{knill2001scheme}, signals the end of passive linear optics as the universal testbed for quantum phenomena---\new{in the near future, most quantum states will not be accessible by postselection alone.}

\new{This paper is organised as follows. In section \ref{sec:gsfromlo} we review the graph state formalism, and the effects of linear optical operations on graph states. In section \ref{sec:pslimits}, we derive methods to establish the postselectability a given experimental configuration. In section \ref{sec:whichps}, we numerically explore the space of postselectable graph-state preparation experiments and determine which graphs can be reached. We summarise our findings in section \ref{sec:conclusion}}


\section{Graph states from linear optics}
\label{sec:gsfromlo}

We first introduce graph states and the special graph operation, local complementation.

Graph states are $n$-qubit stabiliser states which have a direct correspondence to undirected $n$-vertex (order $n$) graphs. A graph state, $\ket{G}$, corresponding to the graph $G=(V,E)$ with vertices $V$, and edges $E$, is written:
	\begin{equation}
	\ket{G} = \prod_{(i,j) \in E} \cz_{ij} \ket{\text{+}}^{\otimes |V|}
	\end{equation}
Where $\ket{+} = (\ket{0} + \ket{1})/\sqrt{2}$ and $ \cz = \dyad{00} + \dyad{01} + \dyad{10} - \dyad{11} $. Graph states are thus real, equal-weight states. 

All stabiliser states can be transformed into some graph state using local operations \cite{anders2006fast}. It is well known, for example, that star-type graph states are locally equivalent to GHZ states. Most states, though, are not locally equivalent to a graph.

\begin{figure*}[t!]
\centering

\captionsetup{width=0.90\textwidth}
\includegraphics[width=1.0\textwidth]{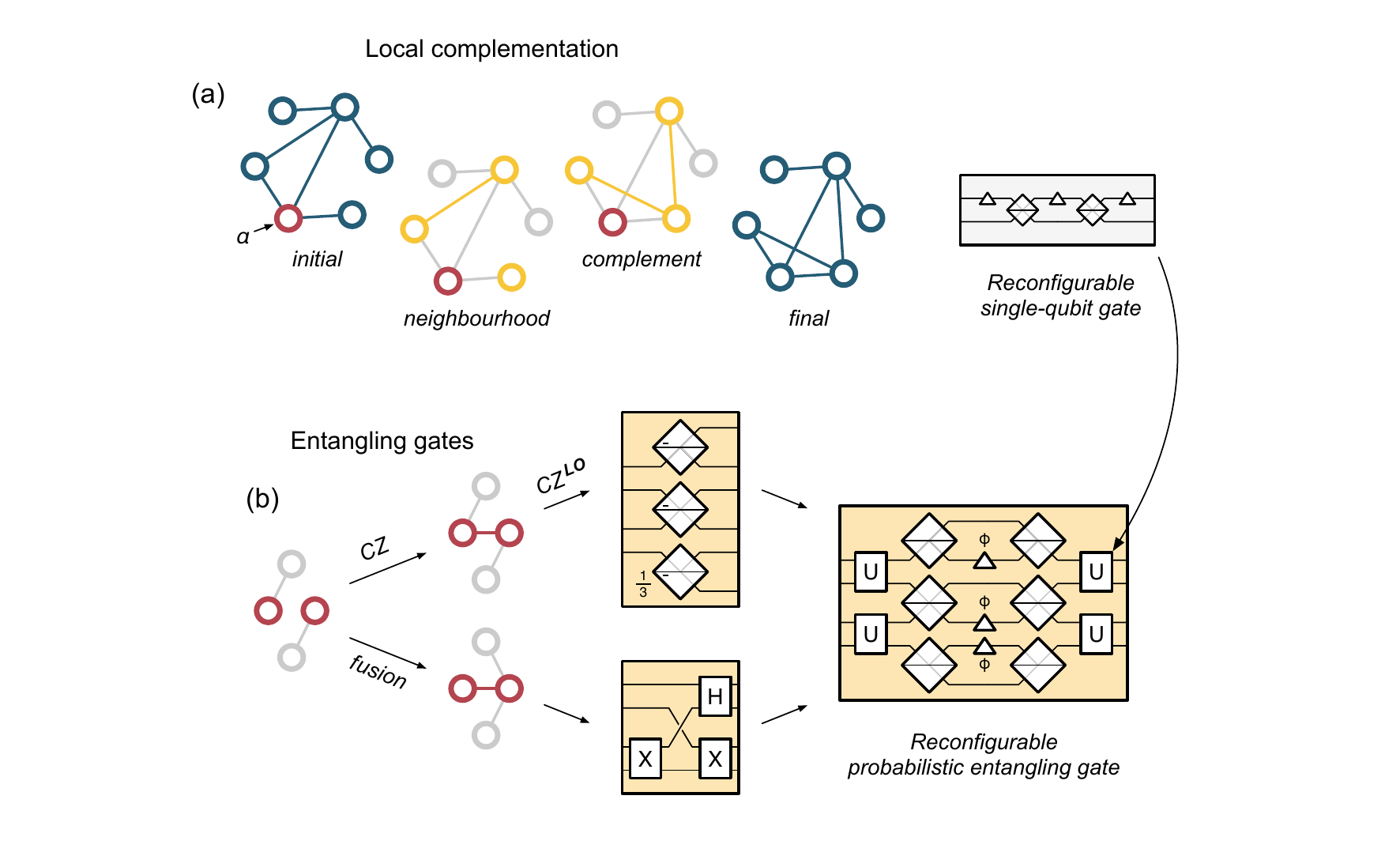}

\caption{\new{Operations on optical graph states. (a) Local complementation of qubit $\alpha$ on a six-qubit graph state. The subgraph containing only the neighbours of $\alpha$ is complemented: edges are removed if present, else added. Local operations, such as the local complementation, can be implemented by the single-qubit Mach-Zehnder interferometer (at right). (b) The two graph entangling operations and their implementations. \cz adds an edge to the graph, and fusion performs a vertex merge, and a vertex addition \cite{bodiya2006scalable}. The grey vertices may each be part of a larger graph. The reconfigurable postselected entangling gate, ($\text{R-PEG}$), introduced here, can perform both gates as well as local complementations, and so is universal for optical graph state postselection. Unlabelled beamsplitters are 50:50 and symmetric (with a $\pi/2$ phase on reflection); $1/3$ beamsplitters give a sign change on reflection from the side marked `$-$'.}}
\label{graphdef}
\end{figure*}

\subsection{Local complementation}

Local complementation provides a link between changes in the graph picture and local operations on the corresponding graph \emph{state}. Graphs which can be transformed into one another by successive applications of local complementation (LC) are locally equivalent---i.e. separated by single-qubit operations only \cite{hein2004multiparty, van2004graphical}. On a graph, $\text{LC}_{\alpha}$ acts to complement the neighbourhood of some vertex $\alpha$ (see figure \ref{graphdef}a). Specifically, successive application of the following local unitary, which implements $\text{LC}_{\alpha}$ on a graph $G$, can produce the entire set of states that are local unitary (LU) equivalent:
\begin{equation}
\text{LC}_{\alpha} = \sqrt{-iX_{\alpha}} \bigotimes_{i \in {N_G(\alpha)}} \sqrt{i Z_{i}}
\end{equation}
where $\sqrt{-i X} = \frac{1}{\sqrt{2}}\begin{psmallmatrix}1 & -i\\ -i & 1\end{psmallmatrix} $ and $\sqrt{i Z} = e^{\frac{i\pi}{4}}\begin{psmallmatrix}1 & 0\\ 0 & i\end{psmallmatrix} $. In optics, this local operation is implemented experimentally with a Mach-Zehnder interferometer between the two rails of qubit $\alpha$ (the neighbourhood of $\alpha$, $N_G(\alpha)$, must be known). We use $\text{LC}_{\alpha}(\ket{G})$, describing a unitary operation on  quantum state $\ket{G}$, and $\text{LC}_{\alpha}G$, the graph operation on graph $G$, interchangeably: $\ket{\text{LC}_{\alpha} (G)} = \text{LC}_{\alpha} \ket{G}$ . 

In graph-theoretic terms, local complementation realises, $\text{LC}_{\alpha} (G(V,E)):\rightarrow G(V,E')$, $E' =  E \cup K_{N_G(\alpha)} - E \cap K_{N_G(\alpha)} $, for some graph $G(V,E)$, where $K_{N_G(\alpha)}$ is the set of edges of the complete graph on the vertex set $N_G(\alpha)$. The subgraph $G[N_G(\alpha)]$, induced by the vertex set $N_G(\alpha)$, is complemented (has its edges toggled), leaving the rest of $G$ unchanged.

A repeated application of local complementation allows us to fully explore any class of locally equivalent graph states, given any member of that class \cite{hein2004multiparty, van2004graphical}. For a full treatment of single qubit operations on graph states, see refs. \tabcite{dahlberg2018transform} and \tabcite{dahlberg2018transforming}.

\begin{figure*}[t!]
\centering
\captionsetup{width=0.90\textwidth}
\includegraphics[width=0.55\textwidth]{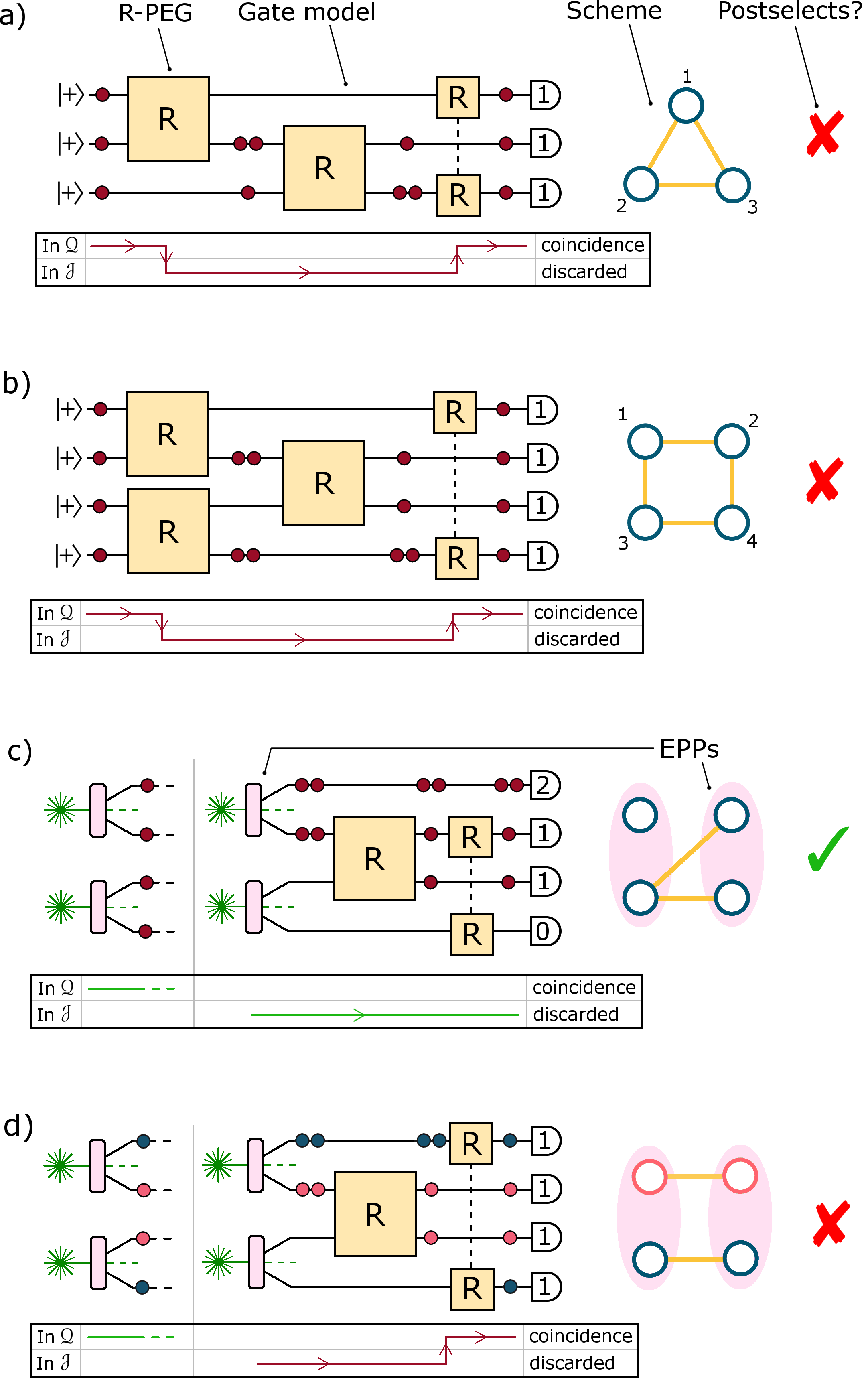}

\caption{Experiment schemes and postselection. a) A single Fock state vector traverses a scheme with a cyclic gate arrangement, getting placed back in the postselected qubit basis, $\mathcal{Q}$, by the final, cycle-inducing gate. We say the experiment does not postselect in this case, as gate failure is masked. b) Example of an experiment that involves concurrent initial gates not postselecting. Each initial gate produces photon-rich and photon-poor qubits which travel in opposite directions around the gate cycle, until the state re-enters $\mathcal{Q}$. c)  A source term in $\mathcal Q$ is shown next to a source term in $\mathcal J$. Junk states from degenerate EPP sources can effect postselectability, however this example does postselect, as there is no path back to $\mathcal{Q}$. d) Junk states from non-degenerate EPP sources re-entering $\mathcal{Q}$.}
\label{failures}
\end{figure*}

\subsection{Postselected entangling gates}

A gate configuration is postselectable if and only if all possible gate failure combinations can be detected, and ignored---when the postselected success signal is observed, all gates have performed as desired.

We distinguish two types of entangling gate in used linear optics---postselected gates (PEG) and heralded gates---both of which are probabilistic. Postselected gates have no auxiliary photons, and consume no photons on success. Heralded gates, on the other hand, either consume auxiliary photons as a resource (such as in the Knill-Laflamme-Milburn scheme \cite{knill2001scheme}), or consume one or more of the input-state photons (such as the cannonical fusion gate \cite{browne2005resource}). In both cases, the measurement outcome ``heralds'' the result of the gate. 

Using measurement and feed-forward, heralded gates remove any non-qubit components of the state, whereas PEGs produce a state which contains terms outside of the qubit subspace. These terms are ultimately filtered out by the measurement configuration, or in post-processing. Note that heralded gates with feedforward are sufficient for universal quantum computation \cite{knill2001scheme}, whilst postselected gates are not. Reference \tabcite{migdal2014multiphoton} discusses which photon-number state transformations are possible without postselection. Here, we only consider postselected gates.

PEGs are interferometers which couple modes between qubit mode pairs, implementing the desired operation on the qubit subspace, $\mathcal{Q}$. Components of the state in the ``junk'' non-qubit subspace, $\mathcal{J}$, are discarded. Here, $\mathcal{J} = \mathcal{F}  -  \mathcal{Q}$, where $\mathcal{F}$ is the space of all Fock states with $n$ or fewer photons, and $\mathcal{Q}$ is the qubit subspace, defined below. The output state of a PEG has components in both $\mathcal{Q}$ and in $\mathcal{J}$.

In the dual-rail encoding, a pair of optical Fock modes $f$ constitute each logical qubit $i$: $\ket{0}_i \leftrightarrow \ket{01}_f$, $\ket{1}_i \leftrightarrow \ket{10}_f$. Then $\mathcal{Q}_i = \text{span}(\{\ket{0}_i,\ket{1}_i\})$ and $\mathcal{Q}= \bigotimes_i \mathcal{Q}_i$. To postselect, we project on to $\mathcal{Q}$ with projector $P_\mathcal{Q}$.

We consider the postselected \cz (success probability $1/9$) \cite{hofmann2002quantum, ralph2002linear}, which we denote \czlo, and a postselected version of the fusion gate, $F$ (success probability $1/2$) \cite{bodiya2006scalable, browne2005resource}. These gates can both be implemented by the reconfigurable postselected entangling gate (\rpeg), shown in Figure \ref{graphdef}b. The \rpeg consists of three Mach-Zehnder interferometers (MZIs) over six modes.

Interferometers implement linear mode (Bogoliubov) transformations, which map Fock states to other Fock states unitarily. We can use postselection (via $P_\mathcal{Q}$) to understand the effect of \czlo and $F$ on qubit states. To understand their effect on qubit basis inputs (in $\mathcal{Q}$) we apply $P_\mathcal{Q}$ on the right-hand side as well. \czlo can be realised with the $\rpeg$ by setting $\phi =  \arccos(\frac{1}{3})$, yielding: 
		\begin{equation*}
			P_{\mathcal{Q}}  \czlo    P_{\mathcal{Q}} = \frac{1}{3} \cz,
		\end{equation*} 
equivalent to the cannonical \cz gate, after renormalisation. The postselected fusion gate, $F$, swaps $\ket{1}_1$ and $\ket{1}_2$ of its two input qubits, and applies a Hadamard gate to qubit 1, as shown in Figure \ref{graphdef}b \cite{bodiya2006scalable, bell2012experimental, yao2012observation}. This non-unitary operation deletes $\ket{01}_{12}$ or $\ket{10}_{12}$ qubit components of the state by transforming them to two-photon-per-qubit components in $\mathcal{J}$, which are filtered by postselection $P_\mathcal{Q}$ (coincidence detection). Fusion may be written in the qubit basis as:
		\begin{equation*}
			P_{\mathcal{Q}}  F  P_{\mathcal{Q}} = (\ddyad{+0}{00} \mkern+3mu + \mkern+3mu  \ddyad{-1}{11}).
		\end{equation*}
This is an entangling operation, and succeeds with probability $1/2$. For example, the action of this gate on the separable state $\ket{\text{++}}$ results in the (subnormalised) entangled two-qubit graph state $P_{\mathcal{Q}} F \ket{++} = \frac{1}{2}(\ket{+0}+\ket{-1})$. Figure \ref{graphdef}b describes the action of both \cz and $F$ gates on graphs.


\begin{figure*}[t!]
\centering
\captionsetup{width=0.90\textwidth}
\includegraphics[width=1.0\textwidth]{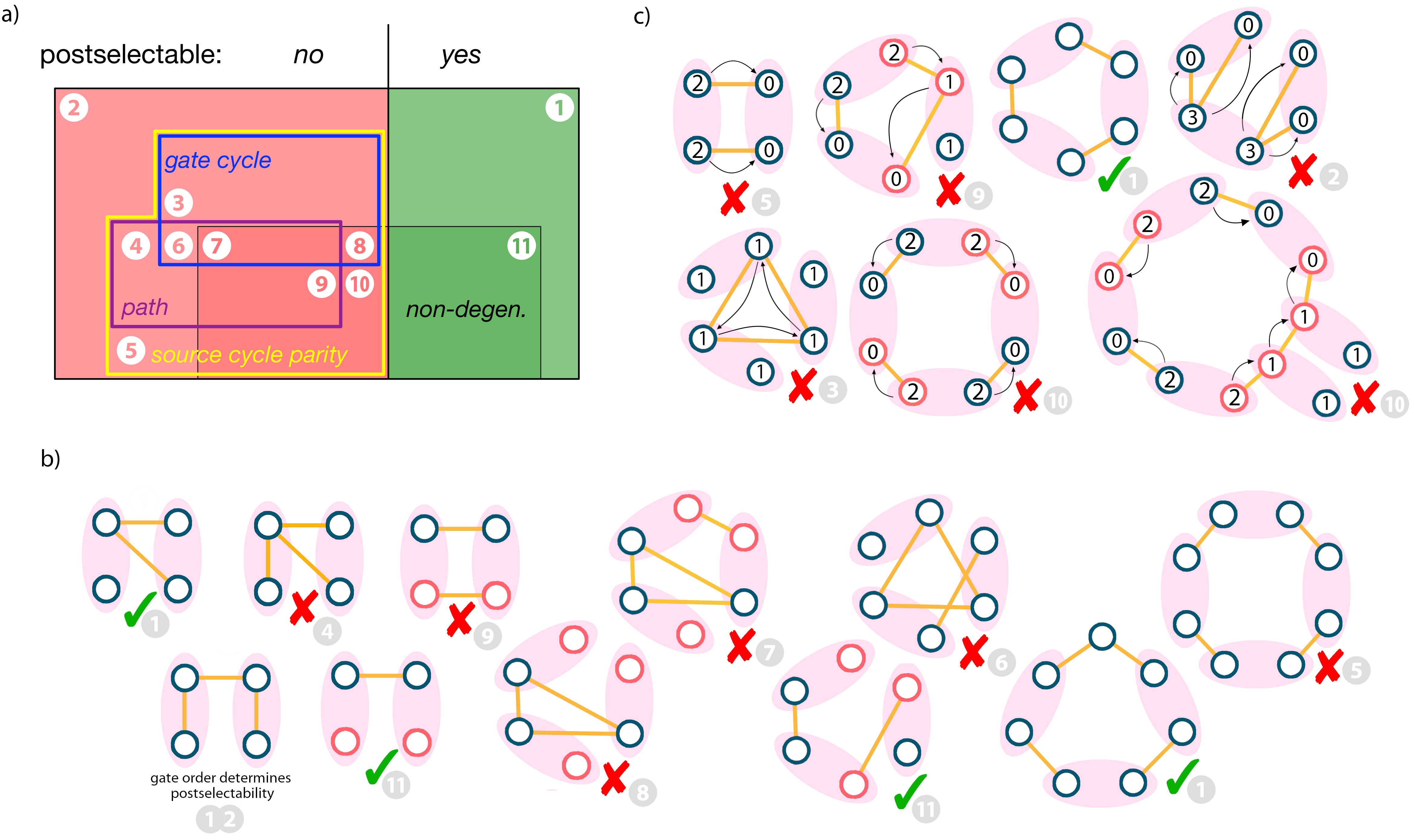}

\caption{\new{a) Venn diagram summarising postselection rules. The paths and gate cycles rules are necessary conditions, whilst the source cycle parity rule is sufficient only in the case of non-degenerate EPP sources. b) Elemental example schemes showing the postselectability of degenerate and non-degenerate EPP source experiments. The order that the gates are applied in does not effect postselectability, except where a gate can shuttle more than one photon at a time to cause failure---the example shown only postselects if the inter-source gate is performed last. c) Example schemes with source terms superimposed. Single photons travel through gates in the directions of the arrows to re-enter $\mathcal{Q}$}}
\label{venn}
\end{figure*}

\section{The limits of postselection}
\label{sec:pslimits}

In this section we demonstrate why certain arrangements of gates (PEGs) and photon-pair sources (EPPs) are not postselectable, and derive a simple test to inform postselectable experiment design. This is done by drawing the experiment as a graph, which we call a scheme.

\subsection{Criterion for gate postselectability}

Which gate arrangements are postselectable? Convention holds that sequential PEGs lead to gate failure, masked as success (i.e. are not postselectable). In this section, we derive a condition on the arrangement of these gates for this failure.

We wish to understand when a given combination of gates is not postselectable. To do so, we analyse the evolution of junk states (in $\mathcal{J}$) produced by PEGs. Since coincidence detection ensures that we only count output states with one photon in every qubit, we disregard parts of $\mathcal J$ with fewer than $n$ photons (for $n$ qubits in $2n$ modes). Here, we use ``qubit'' to refer to a pair of modes, whether they are occupied by photons, or not.

First, we discuss some properties of PEGs. When a gate fails, it produces an output state in $\mathcal{J}$, resulting in a qubit with no photons, and another with excess photons---one is photon-poor, and the other is photon-rich. Gates can also move states from $\mathcal J$ back into $\mathcal Q$. Generally, they can redistribute photons so that different qubits become photon-poor and photon-rich, throughout the circuit. If an arrangement of gates can move the state out of the qubit basis, $\mathcal Q$, and subsequently return the state to it, we say this gate arrangement is not postselectable, as the gate failure is masked (see Figure \ref{failures}a-b). Equally, the information of whether a the gate succeeded or failed is lost.

To re-enter $\mathcal Q$, the photon-poor and photon-rich qubits must meet again---after being generated together---by taking different paths through the experiment. Together, their paths form a loop. If we draw the gate configuration as a graph (with vertices as qubits, and edges as gates), then this loop corresponds to a cycle in that graph. We call this graph the experiment's scheme. Such a cycle is the only way the state can leave and subsequently re-enter $\mathcal{Q}$. Thus, configurations with cycles are not postselectable.

This holds for any time-ordering of the gates. In a cycle of gates, each gate's output is connected to an input of another gate. Junk is produced by all of the first time-step gates, re-entering $\mathcal Q$ by any subsequent gates that link them. A concrete example is shown in Figure \ref{failures}b.

As an example, if we apply three postselected \czlo gates to $\ket{\text{+++}}$, in an attempt to produce a three qubit ``triangle'' graph state (as in Figure \ref{failures}a) we find a corrupted result:
	\begin{equation*}
		\begin{split}
		\ket{\psi} &=   \mkern+5mu P_\mathcal{Q} \: \czlo_{31} \: \czlo_{23} \: \czlo_{12} \: \ket{\text{+++}}\\
		&=  \frac{1}{108} \left(\sqrt{2}+4 \sqrt{3}\right) \ket{000} \\
		&+     \frac{1}{54 \sqrt{2}}  \left(\ket{001} + \ket{010} - \ket{011}  + \ket{100} - \ket{101} - \ket{110} \right)  \\
		&+    \frac{1}{324} \left(4 \sqrt{3}-3 \sqrt{2}\right)\ket{111}  
		\end{split}
	\end{equation*}
The generated state, $\ket{\psi}$, differs significantly from the (unnormalised) desired state, which has equal weights of $1/(27 \cdot 2\sqrt{2})$ with signs $\{+,+,+,-,+,-,-,-\}$. Junk terms produced by earlier gates are returned to the postselected subspace by subsequent gates. The squared amplitude of this postselected state is well above the expected success probability: $ |\mkern-\thinmuskip\braket{\psi}\mkern-\thinmuskip|^2 \approx 0.00706 > (\frac{1}{9})^3  \approx 0.00137$; the state is dominated by terms which have re-entered $\mathcal{Q}$, scrambling the state. Cycles of fusion gates similarly scramble the state, resulting in states which are dominated by junk components which have re-entered $\mathcal{Q}$.

Therefore, a sufficient condition for successful gate postselection is: \emph{experiments containing cycles of PEGs are not postselectable}. We will refer to this as the ``gate cycles rule''. \new{In the next section, when postselected sources are considered, we will find that this gate cycles rule is encompassed by something altogether larger.}

\begin{figure*}[t!]
\centering
\captionsetup{width=0.90\textwidth}
\includegraphics[width=0.5\textwidth]{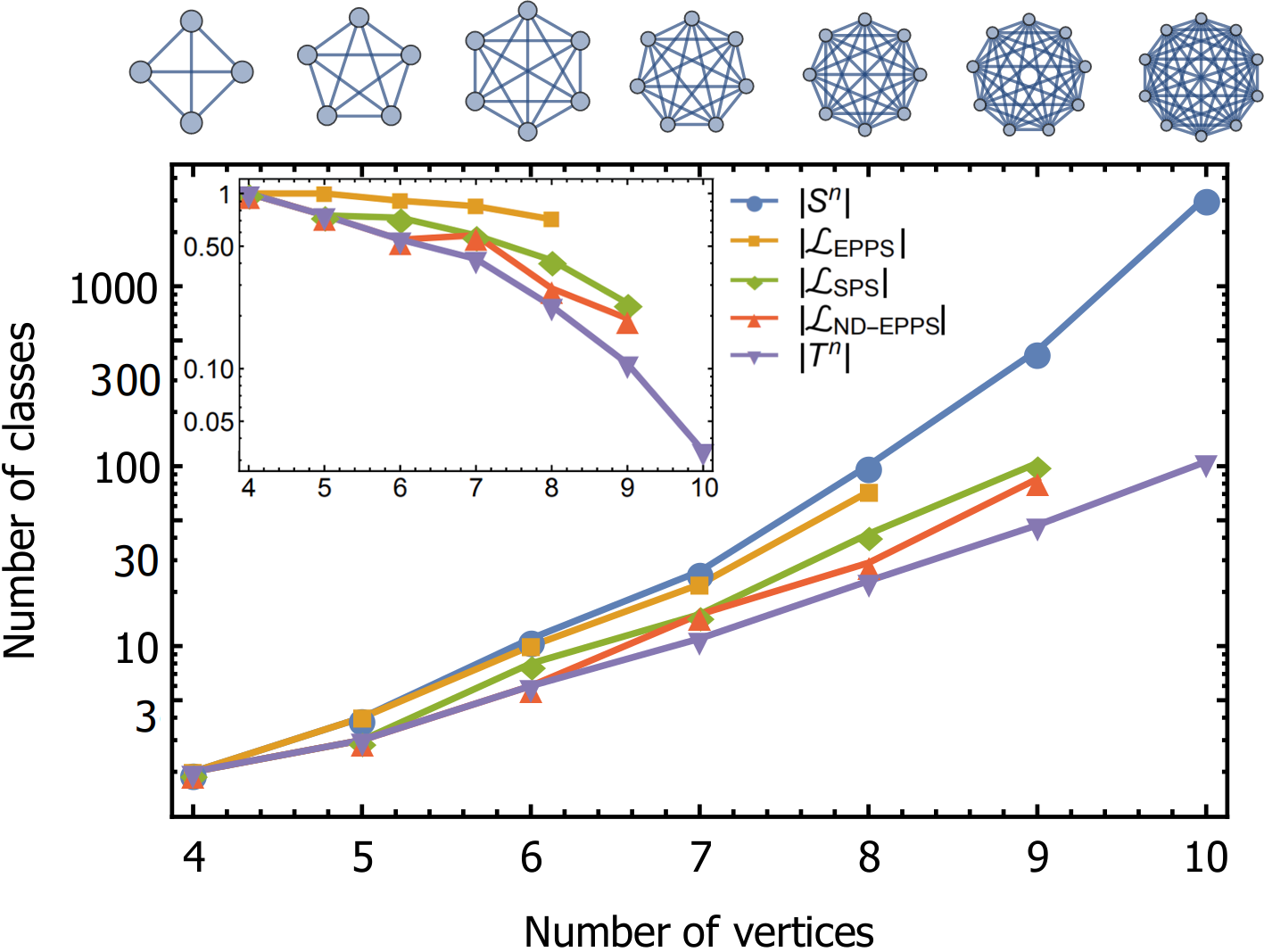}

\caption{Logarithmic plot of the number of graph state entanglement classes accessible to different resource states with $n$ qubits, $|\mathcal{L}_R|$
, as well as the total number of classes, $S^n$, and the number of trees with $n$ vertices $T^n$.  Here $\mathcal{L}_\text{EPP}$ is the set of classes accessible to degenerate entangled postselected pairs, $\mathcal{L}_\text{SPS}$ is the set of classes accessible to heralded single photon sources, and $\mathcal{L}_\text{ND-EPP}$ is the set of classes accessible to non-degenerate entangled postselected pairs. Inset: number of postselectable classes as a proportion of the number of entanglement classes. Isomorphic graphs are counted only once. Sequences from refs.\ \cite{danielsen2006classification, OEISA000055}.}
\label{plotgraphno}
\end{figure*}

We are unaware of any PEG that does not lead to two-photon-per-qubit terms, and indeed, such a gate would require there to be no photonic path between the modes of its input qubits. Such a gate will be postselectable even when used in cycles. To our knowledge, LCs, postselected CZ and postselected fusion represent the full known capability of postselective linear optics' to produce graphs states---all two-qubit Clifford gates can be decomposed into a CZ with LCs \cite{anders2006fast}. We also note that in ref. \tabcite{bodiya2006scalable} it is claimed that any graph state can be produced using postselected fusion only (with a focus on 2d lattice states). We can now see this claim to be unwarranted, since the proposed experiment violates the gate cycles rule.



\subsection{Combining degenerate postselected pair sources and gates}
\label{sec:degen}
In this section, we will examine experiments that utilise entangled postselected pair sources that produce degenerate (indistinguishable) pairs of photons. Ensembles of $m$ EPP sources produce states which are mostly in $\mathcal{J}$---for $m$ coherently pumped sources producing exactly $m$ pairs, a superposition of all $\begin{psmallmatrix}2m-1\\m\end{psmallmatrix}$ permutations are produced. 
Only one of these terms is in $\mathcal{Q}$---the term in which one pair is produced by each source---the rest are in $\mathcal{J}$ (see Appendix 1.5). Experiments combining EPPs and gates may not be postselectable even in the absence of cycles of gates, because the input superposition already contains junk. The gate cycles rule alone is insufficient in these situations. 
Experiments involving EPP sources can also be drawn as a graph, where both EPP sources and gates are represented by different types of edges. Example schemes are shown in In Figures \ref{failures}, \ref{venn} and \ref{alg}, where standard edges represent gates and pink ellipses represent source-edges.

The lowest-order junk state produced by $m$ pair sources is one with two photon-rich qubits (from source $i$) and two photon-poor qubits (from source $j\neq i$). These terms cause gates that disjointly connect the qubits from source $i$ to the qubits from source $j$ to mask gate failure. By disjointly, we mean that the two paths do not share gates (edges), however they may share vertices. The excess photons from source $i$ can travel via gates to the qubits of source $j$ and hence re-enter $\mathcal Q$ (causing postselection failure). This is depicted in Figure \ref{venn}b. Hence, \emph{experiments containing a pair of disjoint paths in the scheme that connect the qubits from one EPP source with the qubits from another, are not postselectable}. We will call this the ``paths rule''. Because each gate acts independently---simply moving a photon toward the photon-poor qubit---postselection fails no matter the gate order.


\begin{figure*}[t!]
\centering
\captionsetup{width=0.90\textwidth}
\includegraphics[width=1.0\textwidth]{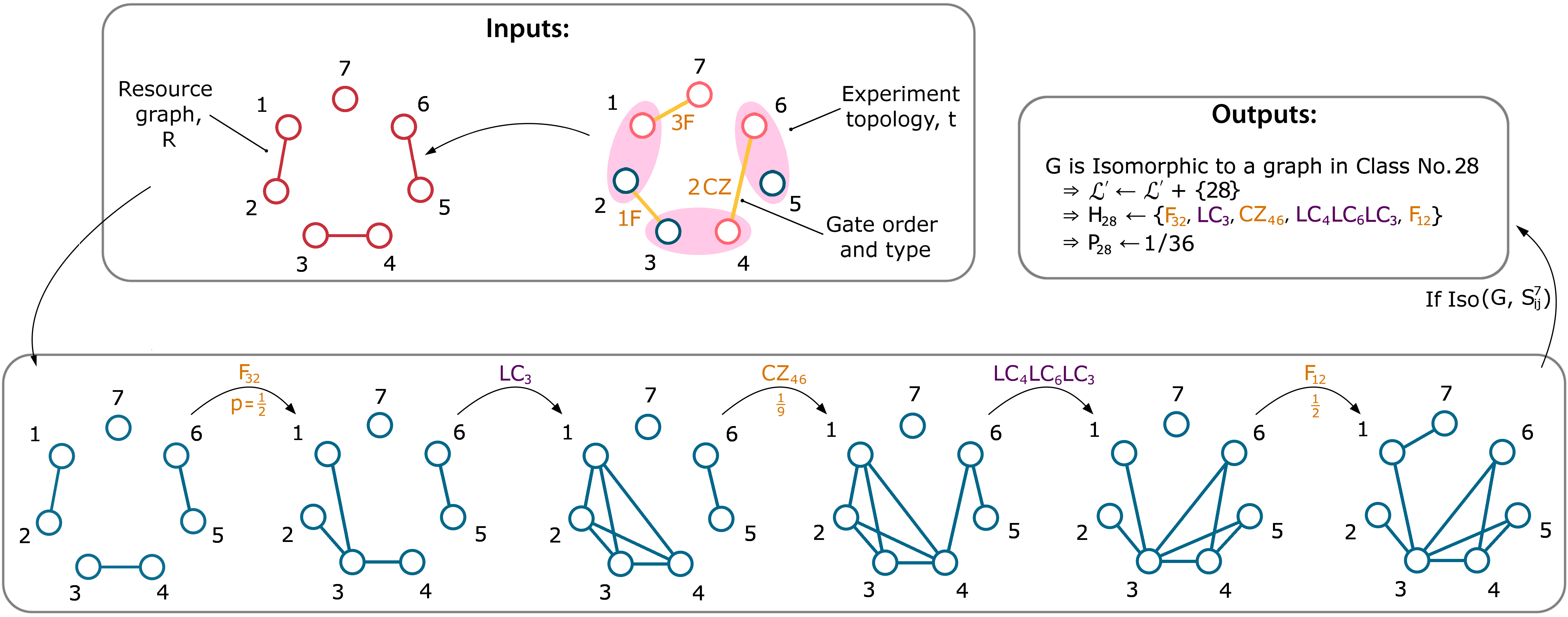}

\caption{An example of one Monte Carlo iteration of \textsc{FindAccessibleClasses}. Starting from a given resource state, in this case three non-degenerate EPPs and a heralded single photon, operations for an entangling strategy are performed in order, interspersed with \text{LC}s on the previously acted upon vertices. If the resulting graph is not isomorphic to any graph found thus far, the entanglement class of the graph $i$ is saved to a set $\mathcal{L}_R'$. After many runs, $\mathcal{L}_R'\approx \mathcal{L}_R$. An alternate example can be found in Appendix 1.2.}
\label{alg}
\end{figure*}

\new{When we consider the origins of the above paths rule, we observe that each source with one extra pair must simply connect with a source with zero pairs. This applies to any scheme which is a ring of $m$ EPP sources connected by gates---if $m$ is even, postselection fails. In schemes which are a ring of EPP sources, each qubit has just one gate acting on it. This means that higher-photon-number source terms ($n>2$) cannot re-enter $\mathcal{Q}$: $n-1$ connected gates are required to distribute $n$-photon terms to $n$ empty qubits (and bring the state back in to $\mathcal{Q}$). Hence only two-photon source terms can cause postselection to fail.}

\new{First we examine $m$ even. If alternate sources fire twice, then each source that didn't fire neighbours a photon-rich qubit. Hence the photons can travel through the gates and the can re-enter $\mathcal {Q}$: postselection fails. For a ring of $m$-odd connected sources, there can be no configuration where alternate sources fire, so postselection cannot fail in this way. For postselection to fail in a ring, any source that didn't fire must have both its qubits adjacent to others which fired twice. The term where alternate sources fire twice is the only one which satisfies this. Hence, postselection succeeds and we arrive at our third rule of postselection: \emph{Schemes that contain a cycle that has an even number of source-edges are not postselectable}. We will call this the ``source cycle parity rule''.}

\new{As with the paths rule, any number of gates can connect the sources, and these can be performed in any order---to fail, each gate simply moves one photon toward the photon-poor qubits. To be clear, gates may act on qubits from EPPs not involved in the cycle; both qubits of an EPP source must be part of the cycle to contribute to the rule (see Figure \ref{venn}).}

\new{The source cycle parity rule is a generalisation of the paths and gate cycles rules. The gate cycles rule is the case of zero sources (zero is even) and the paths rule is with two sources. Like the paths rule the source cycle parity rule is a necessary, but not sufficient condition of failure in the case of degenerate EPPs. The paths and source cycle parity rules only account for some sources firing twice, and others not firing at all. Some schemes, however, only fail to postselect because of higher-photon-number junk (see category 2 in Figure \ref{venn}). Furthermore, gates can not act independently if they can move more than one photon at a time, as is often the case with higher-photon number emission. In this case, gate order can effect postselectability. Hence the source cycle parity rule is a necessary, but not sufficient, condition of postselectability: an experiment which satisfies the it may still not postselect.}

Unfortunately, a sufficient condition for postselectability of combinations of PEGs and degenerate EPP sources is not forthcoming. In section 4, we apply numerical methods to evaluate postselectability, determining which graph states are postselectable for a variety of different source types.

\begin{figure*}[t!]
\centering
\captionsetup{width=0.90\textwidth}
\includegraphics[width=1.0\textwidth]{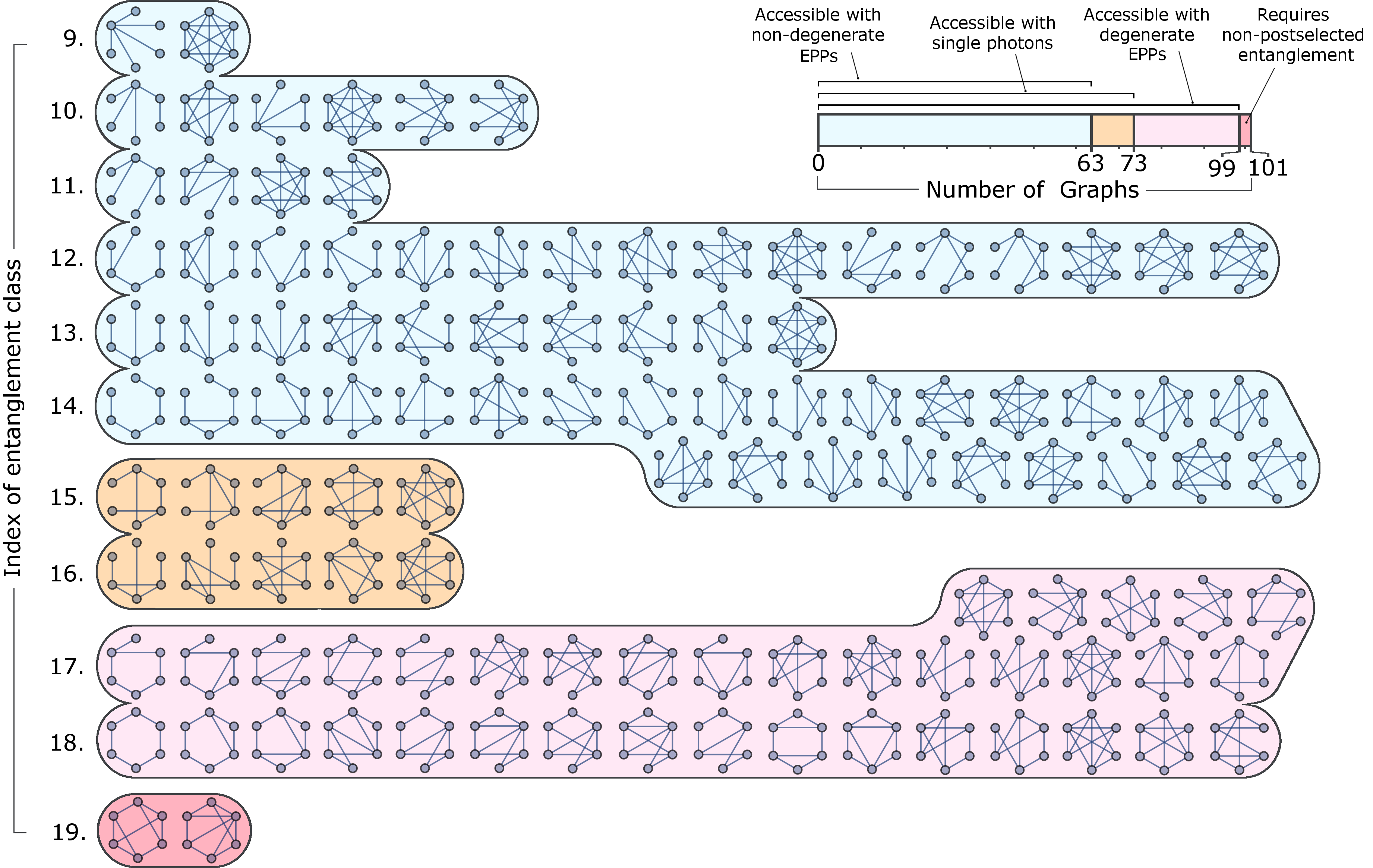}

\caption{Six photon graph states organised in to their LU classes, cannonically indexed starting from the 2-vertex graph state. See Appendix 1.3 for a complete list of graph states indexed up to 8 vertices. A resource of non-degenerate postselected pairs can access all of the graphs shown in blue, whilst a resource of single photons can access all of the states shown in orange and blue, and a resource of degenerate postselected pairs can access all of the graphs shown in pink, orange and blue. Graphs shown in red require a heralded entangled resource state to be produced using PEGs.}
\label{6qub}
\end{figure*}

\subsection{Non-degenerate nonlinear sources and postselected gates}

\new{An interesting and commonly used class of EPP source are those which generate non-degenerate photons, that is, photons of different colours, polarisations, etc. Since the two sets of photons are completely distinguishable, they cannot be interacted in gates. All interferometers that are not postselectable with degenerate EPP sources are also not postselectable with non-degenerate EPP sources, but number of valid gate arrangements is drastically reduced in the non-degenerate case. Here we will combine same-colour gates with the source cycle parity rule and show that in this case, it is a sufficient condition (see Figure \ref{venn}).}
	
\new{Firstly, schemes containing a cycle of $m$ odd  non-degenerate EPP sources (connected by gates) are not possible---the parity of the distinguishing degree of freedom prevents it. Combining this with the source cycle parity rule implies that any scheme containing a cycle of non-degenerate EPP sources will not postselect.}


\new{For sufficiency, we show that higher-photon-number ($n>2$) source terms cannot re-enter $\mathcal{Q}$ unless the source cycle parity rule is broken. With $m$ sources, the source cycle parity rule is broken for any scheme of more than $m-1$ gates---any connected graph of $2m$ vertices and $ e \geq 2m $ edges must contain a cycle, and the scheme has $m$ source-edges before any gates are added.}

\new{Take a connected scheme of $m$ non-degenerate EPP sources (and $2m$ qubits) where a higher-photon-number ($n \geq 2$) source term  re-enters $\mathcal{Q}$. Without loss of generality, this term is composed of sources $i=1,\ldots,K$ generating $s_i>2$ pairs, $n_1 = m-\sum_i^K s_i$ sources generating a single pair and $n_0 = \sum_i^K (s_i-1)$ sources generating nothing. To re-enter $\mathcal{Q}$, each of the $2n_0=2\sum_i^K (s_i-1)$ photon-poor qubits must be connected to a photon-rich qubit with at least one gate. Further, for the scheme to be globally connected, a gate must connect each of the $n_1=m-\sum_i^K s_i$ sources which fired once. Hence the scheme has $n_g \geq 2n_{0} +n_{1} =2\sum_i^K (s_i-1) + m - \sum_i^K s_i = \sum_i^K s_i - 2K +m$ gates. Since $s_i \geq 2$ we have $n_g > m-1$ and the scheme breaks the source cycle parity rule.}





\new{In Section \ref{sec:degen} we find that gate order has no effect on postselectability for the source cycle parity rule. Hence our graphical schemes, which do not encode a gate order, are are enough to decide postselectability for non-degenereate EPPs experiments.}

\begin{table*}[t!]
\centering
\captionsetup{width=0.90\textwidth}
\small{

\begin{adjustbox}{center}
\begin{tabular}{ p{0.1cm} p{6cm} P{8.0cm} P{2.2cm}}

 $n$ & Resource state, $R$                   &  Indices of LU classes accessed, $\mathcal{L}_R '$         &  $|\mathcal{L}_R '| / |S^n|$   \\
 
\noalign{\vskip 1mm}    
\hline
\hline
\noalign{\vskip 2mm}    
 $4$ & 2 non-degenerate EPPs                                  				&    3, 4                                          & $2 / 2 = 1$ \vspace{0.2cm}\\

 $5$ & 2 non-degenerate EPPs \& 1 single photon           					&    5 to 8                                          & $3 / 4 = 0.75$ \\
 $5$ & 5 single photons                              					&    5 to 8                                          & $3 / 4 = 0.75$ \\
 $5$ & 2 degenerate EPPs \& 1 single photon              				    &    5 to 9                                          & $4 / 4 = 1$ \vspace{0.2cm}\\

 $6$ & 3 non-degenerate EPPs				                            &    9 to 14                                         & $6 / 11 \approx 0.54$ \\
 $6$ & 6 single photons                                   				&    9 to 16                                         & $8 / 11 \approx 0.73$ \\
 $6$ & 3 degenerate EPPs								 				&    9 to 18                                         & $10 / 11 \approx  0.91$ \\
 $6$ & 2 pairs \& 2 single photons                        			&    9 to 19                                         & $11  / 11 = 1$ \vspace{0.2cm}\\

 $7$ & 3 non-degenerate EPPs \& 1 single photon				      	  &     20 to 32, 34, 36                               & $15 / 26 \approx 0.58$ \\
 $7$ & 7 single photons                                                   &     20 to 32, 34, 36                               & $15 / 26 \approx 0.58$ \\
 $7$ & 1 pair \& 5 single photons                                     &     20 to 41                                       & $22 / 26 \approx 0.85$ \\
 $7$ & 3 degenerate EPPs \& one single photon					  		  &     20 to 41                                       & $22 / 26 \approx 0.85$ \\
 $7$ & 2 entangled pairs \& 3 single photons                                   &     20 to 45                                       & $26 / 26 = 1$ \vspace{0.2cm}\\

 $8$ & 4 non-degenerate EPPs        						&   46 to 68, 70, 73, 74, 79, 89, 92                                   & $29 / 101 \approx0.29$ \\
 $8$ & 8 single photons                 					&   46 to 79, 83, 86 to 87, 89, 92, 94, 104, 107, 121, 143             & $42 / 101 \approx 0.42$ \\
 $8$ & 1 pair \& 6 Single Photons      						&   46 to 108, 114, 115, 117, 121, 123 to 125                         & $73 / 101 \approx0.72$ \\
$8$ & 4 degenerate EPPs		&   46 to 68, 70, 75, 76, 78 to 80, 82, 83, 86 to 89, 92 to 98, 100, 101, 103, 108, 109, 110, 112 to 116, 118 to 121, 123 to 125, 127, 129, 130, 133, 134, 136, 139, 141, 144, 145                                         			 	       & $72 / 101 = 0.71$ \\
 $8$ & 4 entangled pairs                        			&   46 to 140, 142 to 144, 146                                         & $99 / 101 \approx0.98$ \vspace{0.4cm}\\

$9$   & 4 non-degenerate EPP \& one single photon       &   148 to 197, 199, 200, 204 to 208, 212, 214, 218, 219, 221, 222, 225, 227, 233 to 235, 238, 241, 247, 249 to 251, 256, 259 to 260, 263, 272, 274, 275, 297, 328, 392, 410                              &   $85 / 440 \approx 0.19$    \\ 
$9$   & 9 single photons                                 &   148 to 200, 202, 204 to 208, 211, 212, 214 to 215, 218, 219, 221 to 222, 225, 227 to 229, 233 to 235, 237 to 238, 241, 247, 249 to 251, 256, 259, 260, 262, 263, 271 to 275, 290, 297, 299, 301, 305, 319, 328, 329, 342, 346, 388, 392, 401, 41         & $ 104 / 440\approx 0.24$  \\ 
$9$  &4 pairs \& one single photon                   &   148 to 558, 560 to 567, 569 to 572, 575 to 578, 581, 584, 586, 587                      &   $ 431 / 440\approx 0.98$ \vspace{0.0cm}\\

\end{tabular}

\end{adjustbox}
}

\caption{Entanglement class accessibility for different resources. $\mathcal{L}_R '$ is the set of classes of graph state that can be generated using PEGs given different resource states $R$. Entanglement classes (denoted $S^n_i$) are indexed starting from the connected $2$-vertex graph state. See Appendix 1.3 for a complete list of graph states indexed up to 8 vertices (from refs.\ \cite{cabello2011optimal, hein2004multiparty, danielsen2006classification}). This table was generated by \sc{FindAccessibleClasses}.}
\label{table1}

\end{table*}


\section{Which graph states are postselectable?}
\label{sec:whichps}

Now that we are familiar with the rules of postselecting graph states, we can establish which states can be accessed, and which states cannot.

First, we lower bound the number of classes that are accessible from the popular resource state of $n$ non-degenerate pair sources, with the number of trees (non-cyclic graphs). This follows from the fact that trees can always be constructed from this resource (see Appendix 1.1), and that there is at most one tree in each entanglement class \cite{bouchet1988transforming}. Figure \ref{plotgraphno} compares the number of trees to the number of entanglement classes for increasing qubit (vertex) number, and reveals a super-exponential divergence \cite{OEISA000055, bergeron1998combinatorial, danielsen2006classification}. 

Because of the combinatoral number of possible experiments using PEGs and LCs, we turn to numerical methods to discover exactly which classes of entanglement are accessible to postselective linear optics given a certain resource state. Our approach is to sample allowed combinations of entangling gates and local complementations, and catalogue the graph states which result. By using LCs, postselected CZ and fusion gates, we span the currently known capability of postselective linear optics' to produce graphs states; though gates can only be applied in trees, the use of LCs allows access to a wider variety of graph state classes---including those not containing trees. Note that all two-qubit Clifford gates can be decomposed into a CZ with LCs \cite{anders2006fast}. We use the canonical indexing provided by Hein, Cabello, \emph{et al.} in refs. \tabcite{hein2004multiparty} and \tabcite{cabello2011optimal}. We denote the set of graph state class indices that can be accessed by a given resource state $R$ by $\mathcal{L}_R$. In this section we establish methods to find which graph states are postselectable using different kinds of photon source. The results of this are shown in Table \ref{table1}, whilst a complete list six-qubit graph states, showing which are postselectable is shown in Figure \ref{6qub}.

\subsection{Numerical Methods}

In \cite{danielsen2006classification, cabello2011optimal}, tables of representative members for each entanglement class up to $n=12$ are provided. Starting from these supplied graphs, we take random walks to explore each LU class (which are of known size). We denote the $j^\text{th}$ $n$-vertex graph of entanglement class $i$ as $S^n_{ij}$, where $S^n_{i}$ is the set of all graphs in that entanglement class and $S^n$ is the set of all $n$-vertex classes, $S^n = \cup_i S^n_i$. Note $S^n_a \cap S^n_b \neq \emptyset \quad \forall a,b$.

We explore which graph states are accessible to linear optics and PEGs with our algorithm \textsc{FindAccessibleClasses} (visually depicted in Figure \ref{alg} and provided in Appendix 1.2). This algorithm enumerates accessible entanglement classes for a given resource state $R$, and stores the list of accessible classes as $\mathcal{L}_R' \approx \mathcal{L}_R$. The results of our investigation using \textsc{FindAccessibleClasses} are shown in Table \ref{table1}. To aid the classification of any newly found graphs, $S^n$ is stored in memory. We provide plain-text and Mathematica representations of $S^n$, up to $n=9$ qubits, as well as a Mathematica implementation of \textsc{FindAccessibleClasses}. These can be found in the Appendix \cite{adcock2018supp}.

Each iteration of \textsc{FindAccessibleClasses} starts with a resource graph state $R$, and a randomly chosen $n$-qubit schemesatisfying out postselection rules. This $n$-vertex gate arrangement is a graph, $t$, which corresponds to a linear optical experiment composed of PEGs. $t$ is generated randomly from a set $T$, which depends on the resource (see Section 4.2). Each edge of $t$ is randomly assigned either a \cz or an $F$ gate, and a random time ordering. Random combinations of relevant \text{LC}s are interspersed between gates (see section 4.3). As the state transformations are encoded as operations on graphs, we avoid the exponential memory requirement of simulating quantum states. This Monte-Carlo approach is designed to fairly sample postselected linear optical graph-state generation experiments.

After all specified entangling operations have been performed, we store the entanglement class that has been reached, in $\mathcal{L}_R'$. We store the gates and local complementations used, in $H^R$ ($H^R_i$ is the \emph{recipe} for accessing entanglement class $i$ with resource state $R$). These recipes are overwritten when a more efficient one is found. We keep sampling, until no more novel classes are found, at which point we assume we have sampled them all: $\mathcal{L}_R'\approx \mathcal{L}_R$.

\subsection{Resource States}

To fairly sample from postselectable experiments, the sets of allowed gate arrangements, $T$, varies for the different classes of resource: heralded states, degenerate EPPs and non-degenerate EPPs. In all cases, relevant LCs are interspersed between the gates.

For a heralded resource (one not involving postselected sources), our algorithm must sample from all possible experiments that do not contain a cycle of PEGs. Hence $T$ is the set of all connected trees with $n$ vertices. There are $2^{n-1} n^{n-2} (n-1)!$ such experiments. This is the number of labelled trees \cite{cayley1889theorem} multiplied by both the number of possible edge labellings $\{\cz, \text{F}\}$ and the number of gate orderings. Labelled trees are fairly sampled by the Pruefer sequence method \cite{prufer1918neuer}. We sample from the set of all isomorphisms (labellings) of every tree, as to consider only one labelling of the resource state.

For a heralded resource that has some entanglement, not all of the gates represented by $t \in T$ need be applied to obtain $n$-partite entanglement. In \textsc{FindAccessibleClasses}, the number of gates applied are chosen at random, where the minimum number of gates could still feasibly output an $n$-qubit connected graph. For example starting with 3 heralded entangled pairs, only two gates are needed to produce a 6-qubit state, however more entanglement classes are accessible by using up to $5$ gates. Hence, we randomise over the different number of gates. An example member of $t \in T$ is shown in Figure \ref{alg}, with its ordered edge-labelling. The set of trial gate arrangement trees, $T$, can also be tailored to encode any restriction in gate topology, for example where only nearest-neighbour gates are permitted.

For schemes with $m$ non-degenerate EPP sources, $m-1$ gates are required to globally entangle the state. Further, these are not postselectable with $m$ gates, as a cycle will form between sources. In this case, the number of gates is fixed, and we only sample over the trees of order $m$. Since $m=n/2$, and the number of trees scales exponentially with $n$, the search space for non-degenerate pairs is drastically smaller than for any other input resource.

In the case of a resource state of degenerate EPPs, postselectability of a particular gate arrangement must be evaluated directly, since we have no sufficient postselection rule rely on in this case. We evaluate many random (non-interacting) photon scattering through the gate arrangement, for each source term in $\mathcal{J}$. If the photons can return to a one-per-qubit state, in $\mathcal{Q}$, then the gate combination is discarded, as it is unpostselectable. Although this method is not exhaustive, a sufficiently large number of iterations can guarantee accuracy. Due to the extra cost associated with this subroutine, \textsc{FindAccessibleClasses} evaluates the postselectability of one experiment, then permutes the choice of gate (CZ or F), as well as the LCs applied, for 50 experiments which share the topology. This greatly increases the efficiency of class discovery, and is analogous to finding all of the classes accessible by a single scheme which utilises the reconfigurable PEG of Figure \ref{graphdef}b, then moving to another scheme.

\subsection{Sampling linear optical graph experiments}

Applying local operations between gates can yield a wider variety of accessible graph states. Just a few \text{LC} operations are sufficient. After a $\cz_{ij}$ operation, \text{LC}s need only be applied to qubits $i$ and $j$, since \cz commutes with \text{LC} in all other cases. After a $F_{ij}$ operation, \text{LC}s need only be applied to qubits in $\{N_G(i) \cup  N_G(j) + \{i\} + \{j\}\}$, for the same reason, where $N_G(i)$ is the graph neighbourhood of vertex and qubit $i$. Furthermore, $\leq 5$ randomly chosen \text{LC} operations are needed after a \cz gate, since \text{LC} around two vertices is periodic with period 6 \cite{danielsen2008edge}. Similarly, $ \leq 14$ \text{LC}s are needed after a fusion, as this is the largest number of \text{LC} operations needed to traverse the widest class for $n\leq9$ qubits (from numerics). Higher qubit numbers will require concomitantly larger numbers of post-fusion LCs. Proofs are shown in Appendix 1.1.

The size of the configuration space for a resource of heralded single photons is $O(2^{n-1}d_n^{n-2}n^{n-2}(n-1)!)$, where $d_n$ is the diameter of the orbit of the largest $n$-qubit entanglement classes ($d_9=14$). For $n=8$ qubits the size of this configuration space is $\approx 1.3\times 10^{18}$; for $n=10$ it explodes to $\approx 2.7\times 10^{25}$. This makes an exhaustive search impossible, and motivates our use of sampling methods.

Each newly found $n$-vertex graph, $G$, is likely to be isomorphic, not identical, to the corresponding graph in the stored database of entanglement classes, $S^n$. Consequently, \textsc{GraphIsomorphism}, which is computationally hard, must be computed for a range of graphs so that the candidate graph can be properly catalogued. For small numbers of vertices, however, the problem is tractable.

\subsection{Which classes are accessible?}

We have enumerated the entanglement classes that are accessible using certain resource states and PEGs. The results are shown in Table \ref{table1}. Interestingly, those states which are inaccessible tend to have higher canonical indices, ordered both by vertex degree, and by the minimum number of edges on a graph in the class. This also correlates with known bounds on the Schmidt rank \cite{hein2004multiparty, cabello2011optimal}.

Without exploring the entire space, there is no guarantee that all classes have been found, however \textsc{FindAccessibleClasses} appears to converge after sampling a minuscule fraction of configurations. For the results in Table \ref{table1}, the algorithm was terminated when no new classes were found in the last 5/6 of the total number of iterations. For the 8-qubit graphs, this corresponds to sampling about one in every $10^{5}$ possible configurations.

Our exploration using $n\leq9$ was performed using the (un-compiled, single-threaded) Mathematica implementation, on a standard desktop PC. We anticipate that, by using parallel, compiled code, up to $n=12$ vertex graphs can be investigated.

\subsection{Designing graph state generators}

Efficient experimental generation of graph states using photonics is a challenge. \textsc{FindAccessibleClasses} can also be used for experiment design. To find a recipe which produces the desired graph state in class $i$, repeat \textsc{FindAccessibleClasses} with a reasonable resource state input $R$, maximising the probability of generation, $P^R_i$, and recording the recipe, $H^R_i$. Modify $R$ until a satisfactory experiment is found: increase $P^R_i$ by using a more practical $R$ or increase it (especially in the case where no graphs in $i$ are found) by using a more entangled $R$. The recipe $H^R_i$ yields an experiment which produces a graph in the target class $i$. using an $\rpeg$ to realise each PEG (see Figure \ref{graphdef}), where the single-qubit interferometers can perform the necessary local complementations, the circuit is simple to assemble. The optical depth of such circuits is $O(n)$ for $n$ qubits. 

Conversely, \textsc{FindAccessibleClasses} can enumerate the graph states that a given interferometer can access, by fixing $t$ and searching over different combinations of LC and gate type. In this way, combining the rules of postselection, and \textsc{FindAccessibleClasses} and the R-PEG allows new graph generating experiments to rapidly checked for feasibility, and designed with maximum versatility.

\section{Conclusion}
\label{sec:conclusion}

\new{Postselection remains the current go-to tool for generating entanglement between photons in linear optics. We have described a severe and previously undocumented limitation of this technique, in the context of both heralded sources, and degenerate and non-degenerate EPP sources. By drawing gate and source arrangements (schemes) as graphs, where the qubits are vertices, gates are one type of edge and sources are another, we derive three simple rules for postselection:}

\setlist[description]{font=\normalfont\itshape\textbullet\space}
\begin{description}
\item [Gate cycles rule:] \textcolor{Black}{Schemes that have a cycle of gates will not postselect. For deterministic sources of photons, this condition is sufficient.}

\item [Paths rule:] \textcolor{Black}{For experiments with EPP sources. Schemes which disjointly connect both qubits of two sources with one another will not postselect.}
	
\item [Source cycle parity rule:] \textcolor{Black}{For experiments with EPP sources. Schemes which contain a even number of source-edges in a cycle, will not postselect. This subsumes the above rules as cases of zero and two sources. This is a necessary condition in the case of degenerate EPPs, and a sufficient one in the case of non-degenerate EPPs.}
\end{description}

Further, we have tabulated which graph states can be accessed using linear optics and PEGs for $n\leq 9 $, and demonstrated that the number of states available to postselected systems diminishes rapidly with increasing qubit number. We have provided algorithms to calculate which states are accessible for graphs up to $n=12$, limited by the availability of class catalogue data. 

Postselectable graph states that can be produced from EPPs are of interest, with experiments with up to 12 photons possible in the near future \cite{wang2016experimental}. Whilst we have not found an analytic condition on postselectability for the combined postselection of degenerate EPPs and PEGs, our numerics show that a wide variety of states are nonetheless available to this resource---the majority of classes are accessible for $n\leq 8$, though this fraction diminishes with increasing $n$. This is encouraging news for near-term demonstrations of entanglement using linear optics. Postselected sources and gates may still have some mileage before true single photon sources become a necessity.

Despite their widespread use, a vanishing fraction of graph states are accessible using non-degenerate postselected pairs, and these accessible states tend to have low Schmidt rank. With heralded single photons, many more LU classes become accessible, but this too is a diminishing fraction of the total, as qubit number increases. The end of the road for postselected quantum optics is now in sight. Heralded or deterministic gates for photon-photon interactions are not just a route to increased efficiency, but are a necessity if we are to access any appreciable fraction of multi-qubit entanglement classes using optics.

Several questions remain unanswered. Why are certain states accessible and others are not? Why does interspersed local complementation allow for the creation of a wider variety of states? What is the size of the space accessible to hybrid experiments, part postselected, part heralded? Can this reasoning be applied to hyper-entangled, or qudit photonic states? Is there a sufficient rule for the postselection of schemes of degenerate EPP sources?

The limits of postselection are indeed severe, but, with the tools and understanding developed here, planning quantum information experiments which reach these limits will be possible. Multi-photon experiments are often phrased with a measurement-based or state-preparation focus, both of which are enlightened by this work, in the context of postselection. These methods will allow experimenters to produce states with the minimum resource, and with the most efficient optical recipe, expediting progress toward large-scale quantum computation, with optics and otherwise.

\section*{Acknowledgements} 
The authors would like to acknowledge Mercedes Gimeno-Segovia, Sam Pallister, Patrick M. Birchall, Stasja Stanisic, Will McCutcheon, and Raffaele Santagati for fruitful discussion and motivation. This work was supported by the UK Engineering and Physical Sciences Research Council (EPSRC). JCA and SMS are supported by EPSRC grant EP/L015730/1. JWS is supported by EPSRC grant and EP/L024020/1.

\printbibliography

\clearpage
\setcounter{section}{1}
\section*{Appendix}

\subsection{Proofs}

 \begin{lemma}
All of the $n$-vertex graph states that are locally equivalent to a tree can be constructed from $\ceil{\frac{n}{2}}$ entangled postselected pairs (from postselected nonlinear pair sources) using only postselected \text{CZ} and fusion gates. 
 \end{lemma}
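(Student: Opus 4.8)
The plan is to prove the statement in two stages: first reduce "all graph states locally equivalent to a tree" to "every tree on $n$ vertices," and then give a constructive induction that builds an arbitrary tree from the minimal source budget. For the reduction, recall that local complementation is a single-qubit operation (the Mach--Zehnder interferometer of Figure~\ref{graphdef}), and that repeated LC explores an entire local-equivalence class from any one member. Hence it suffices to construct, for each class that contains a tree, that tree $T$ itself as a graph state; every other member of the class is then reached by applying the appropriate sequence of LCs. (The fact that each such class contains a unique tree \cite{bouchet1988transforming} is not needed for reachability, but confirms nothing is missed.) The entangling work is therefore entirely in constructing an arbitrary tree $T$ from $\ceil{n/2}$ EPP sources using only \cz and $F$, with a postselectable scheme.

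For the core construction I would induct on $n$ (reducing $n \to n-2$ in the even case), exploiting three primitives: each source supplies one edge (a Bell pair); a \cz appends an edge; and, crucially, a fusion merges two vertices \emph{and} adds a leaf, so one fusion turns two disjoint edges into a degree-$3$ vertex. The key structural observation is that every tree with $n \geq 3$ vertices has either (i) a pendant path $w\!-\!v\!-\!\ell$ with $\deg v = 2$, or (ii) a vertex $v$ carrying at least two leaf-children, obtained by taking the neighbour of a deepest leaf. In case (i) I delete $v,\ell$ to get a tree on $n-2$ vertices, build it by induction with $\ceil{n/2}-1$ sources, then add one source for the edge $(v,\ell)$ and one \cz attaching $v$ to $w$. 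In case (ii) I delete two leaf-children $\ell_1,\ell_2$ of $v$, build the resulting $(n-2)$-vertex tree, add one source for the edge $(\ell_1,\ell_2)$, and apply a single fusion between $v$ and $\ell_1$: the merge hands $v$ the neighbour $\ell_2$, while the fusion's added leaf supplies the second child. In both cases new qubits are attached only along a path, so the scheme remains \emph{acyclic} (a spanning tree of the qubits) and is automatically postselectable by the gate cycles rule, even though the produced graph state $T$ may have high-degree vertices. The source count is exactly $\ceil{n/2}$, with base case $n=2$ a single source.

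For odd $n$ the same induction applies, but here $\ceil{n/2}$ sources furnish $n+1$ qubits, one more than the $n$ graph vertices; I would reduce to the even construction on $n-1$ vertices and attach the final leaf with one source, removing the surplus qubit via the postselective measurement (equivalently, routing its unwanted component to $\mathcal{J}$ through a fusion). This bookkeeping, together with pinning down the precise graph-state action of $F$, is where I expect most of the care to be needed.

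The main obstacle, and the reason fusion is indispensable, is a \emph{perfect-matching obstruction}. Because the source-edges necessarily form a matching on the qubits, a \cz-only acyclic scheme can realise exactly those trees that contain a perfect matching; the star $K_{1,n-1}$ has none, and attempting to patch this by toggling edges off with extra \cz gates creates gate-only (zero-source, hence even) cycles that violate postselection. Fusion's merge-plus-leaf action is precisely what synthesises high-degree vertices \emph{without} introducing any cycle into the scheme, which is the crux of the argument; verifying this action and tracking the qubit count so that the source budget stays at the minimum $\ceil{n/2}$ (especially the odd-$n$ surplus) is the part I would expect to occupy the bulk of the proof.
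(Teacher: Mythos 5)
Your construction is essentially the paper's own proof: the same induction in steps of two, with the same two structural cases (a pendant path of length two handled by one source plus a \cz, and a pair of twin leaves handled by one source plus a fusion), and the same observation that the resulting scheme is acyclic and hence postselectable by the gate cycles rule. The one genuine difference is how you establish that every sufficiently large tree contains one of the two configurations: you take the parent of a deepest leaf and observe that all its children must be leaves, whereas the paper runs a secondary induction arguing that a vertex addition can only destroy one feature by creating the other; your argument is more direct and sidesteps the paper's need to track feature instances across vertex additions. You also make explicit two points the paper leaves implicit: the reduction from ``locally equivalent to a tree'' to constructing the tree itself via local complementations, and the bookkeeping for odd $n$, where $\ceil{n/2}$ pairs supply one surplus qubit --- the paper's induction only ever treats even orders, so your (admittedly sketchy) disposal of the surplus qubit addresses a case the paper simply omits rather than contradicting it. The perfect-matching discussion of why fusion is indispensable is correct but purely motivational; it is not needed for the proof.
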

 \begin{proof}
 
The two distinct trees with $n=4$ vertices (the ``star'' and ``line'' graphs) correspond to performing $\text{CZ}^{LO}$ or fusion on two pairs respectively, which establishes the base case. The induction step is to show for any order $n+2$ tree, $t$, one can always find a feature of the tree that implies it could have been constructed from some order $n$ tree, and the two vertex connected graph (entangled pair), using an egde-add (CZ) or fusion operation.

These features of $t$ are as follows:

\begin{description}[align=right,labelwidth=2cm]
\item [Feature 1:] $t$ has two vertices in a line formation, where the second vertex is only adjacent to the first. This corresponds to a CZ (edge-add) of an order $n$ tree with the complete two-qubit graph.

\item [Feature 2:] Two leafs (vertices of degree 1) are adjacent to the same vertex of $t$ (but no others). This corresponds to a fusion of an order $n$ tree with the complete two-qubit graph.
\end{description}

To show that all trees have one of these Features, we perform another induction. All order $n+1$ trees can be constructed by adding a vertex (with connecting edge) to some $n$-vertex tree. We will show that these Features can disappear when adding a connected vertex, but only by creating the other Feature. Hence all trees have at least one of these features.

Feature 1 will disappear if a new vertex is connected to the degree-two vertex of the Feature. In this case, the new graph has Feature 2. Similarly, Feature 2 will disappear if a new vertex is connected to one of the vertices of Feature 1. This forms Feature 1. The only tree of three vertices has both of these features. Hence all trees have one of these two features. 

Since all $n+2$ trees have one of these features, it is always possible to find a tree of order $n$ that can be used to construct a tree of order $n+2$, down to $n=4$ where we know how to make both of the trees.

Since each additional pair of vertices has only one gate acting on it, the postselection rules are not violated.

This completes the proof.

 \end{proof}

\begin{lemma}
$[\text{CZ}_{ij}, \text{LC}_\alpha ]=0$ $\forall$ $\alpha \notin \{i,j\}$. ($\text{CZ}_{ij}$ commutes with \text{LC} applied to qubit $\alpha$, $\text{LC}_\alpha$ when $\alpha$ is not one of the qubits acted upon by the \text{CZ}.)
\end{lemma}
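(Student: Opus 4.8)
The plan is to verify the commutation directly at the level of the defining single-qubit unitaries, exploiting the fact that $\text{CZ}_{ij}$ is diagonal in the computational basis while $\text{LC}_\alpha$ factorises as a tensor product of single-qubit gates. Because everything is built from one-qubit operators on disjoint or overlapping sites, no genuine two-qubit entangling structure needs to be tracked; the argument reduces to ``diagonal commutes with diagonal'' together with ``disjoint supports commute.''

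First I would write out the defining unitary
\[
\text{LC}_\alpha = \sqrt{-iX_\alpha} \bigotimes_{k \in N_G(\alpha)} \sqrt{iZ_k},
\]
and split it into two pieces: the single $X$-type factor $\sqrt{-iX_\alpha}$ acting on $\alpha$, and the product of $Z$-type factors $\bigotimes_{k} \sqrt{iZ_k}$ acting on the neighbours of $\alpha$. I would then observe that every $\sqrt{iZ_k}$ is diagonal in the computational basis, as is $\text{CZ}_{ij} = \mathrm{diag}(1,1,1,-1)$ on qubits $i,j$. Since all diagonal operators mutually commute, the entire $Z$-type part of $\text{LC}_\alpha$ commutes with $\text{CZ}_{ij}$ \emph{regardless} of whether $i$ or $j$ happens to lie in $N_G(\alpha)$; this is why the lemma imposes no condition on the $Z$-sites.

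It then remains to handle the factor $\sqrt{-iX_\alpha}$, and here the hypothesis $\alpha \notin \{i,j\}$ is exactly what is needed: $\sqrt{-iX_\alpha}$ acts on a qubit disjoint from the support $\{i,j\}$ of $\text{CZ}_{ij}$, so the two operators live on different tensor factors and commute trivially. (Had $\alpha$ equalled $i$ or $j$, this would fail, since $\text{CZ}$ does not commute with an $X$ on one of its own qubits.) Combining the two observations gives $[\text{CZ}_{ij}, \text{LC}_\alpha]=0$.

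I do not expect a genuine obstacle, but the one point requiring care is \emph{well-definedness} rather than the commutator identity itself: $\text{LC}_\alpha$ depends on the neighbourhood $N_G(\alpha)$, so one should confirm that applying $\text{CZ}_{ij}$ does not change which operator $\text{LC}_\alpha$ denotes. Adding the edge $ij$ alters only the neighbourhoods of $i$ and $j$; since $\alpha \notin \{i,j\}$, the set $N_G(\alpha)$—and hence the unitary $\text{LC}_\alpha$—is identical before and after the CZ, so the statement is unambiguous and the short algebraic proof above suffices.
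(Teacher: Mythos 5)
Your proof is correct and follows essentially the same route as the paper's: both rest on the observation that the $\sqrt{iZ}$ factors of $\text{LC}_\alpha$ and $\text{CZ}_{ij}$ are diagonal and hence commute, while the $\sqrt{-iX_\alpha}$ factor has support disjoint from $\{i,j\}$. If anything your uniform treatment is slightly tidier than the paper's, which splits into the cases $i,j\notin N_G(\alpha)$ and $i,j\in N_G(\alpha)$ and leaves the mixed case (exactly one of $i,j$ a neighbour of $\alpha$) implicit, and your explicit remark that $N_G(\alpha)$ is unchanged by adding the edge $ij$ matches the paper's closing observation.
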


\begin{proof}

If $i,j \notin N_G(\alpha)$ the unitaries (graph operations) act on different qubits (vertices) and therefore commute.  We now examine $i,j \in N_G(\alpha)$. Note that complementation of a subgraph defined by a fixed set of vertices commutes with a \text{CZ} operation, since both toggle the edges present in the graph (addition modulo 2). This can also be understood by examining the \text{CZ} and \text{LC} unitaries. In $\text{LC}_\alpha$ for  $i,j \in N(\alpha)$, qubits $i$ and $j$ undergo a $\sqrt{iZ}$ operation, which is diagonal. Since \text{CZ} is also diagonal, these operations commute, that is  $[\text{CZ}_{ij}, \sqrt{iZ_k} ]=0$ for $k=i,j$. Note that $\sqrt{iZ_k}\otimes \sqrt{iZ_l}$ is also diagonal. Since $N_G(\alpha)$, is unaffected by the \text{CZ}, $\text{LC}_\alpha$ and \text{CZ} commute if $i,j \notin N(\alpha)$. 

\end{proof}

\begin{lemma}
Repeated \text{LC} on $i$ and $j$ on some graph $G$ has just one periodic path through the members of the LC class. 
\end{lemma}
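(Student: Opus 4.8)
The plan is to realise the set of graphs reachable from $G$ by repeated $\text{LC}_i$ and $\text{LC}_j$ as a single closed walk. The starting observation is that each $\text{LC}_\alpha$, viewed purely as a graph operation, is an involution: it toggles exactly the edges inside the induced subgraph $G[N_G(\alpha)]$ while leaving the neighbourhood $N_G(\alpha)$ itself unchanged, so a second application toggles the same edges back and recovers $G$. Hence $\text{LC}_i^2 = \text{LC}_j^2 = \text{id}$ as maps on graphs, and the members of the class reachable from $G$ form the orbit of $G$ under the group generated by two involutions, which is necessarily dihedral. I would lean on this dihedral structure to force the orbit into a one-dimensional (cyclic) shape.

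Concretely, I would build the transition graph $\Gamma$ on the orbit: place an $i$-coloured edge between each reachable $H$ and $\text{LC}_i(H)$, and a $j$-coloured edge between $H$ and $\text{LC}_j(H)$. Because each LC is an involution these edges are well defined and undirected, and every vertex of $\Gamma$ meets exactly one $i$-edge and exactly one $j$-edge, so $\Gamma$ is $2$-regular with a proper $2$-edge-colouring. Since the orbit is generated from the single seed $G$, $\Gamma$ is connected, and a connected $2$-regular graph is a single cycle. Tracing this cycle must alternate the two colours---on arriving at $H$ along its $i$-edge the only exit is the $j$-edge, and vice versa---so the cycle has even length and is swept out precisely by the alternating sequence $\text{LC}_i, \text{LC}_j, \text{LC}_i,\dots$. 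That is the single periodic path claimed, and its length (the period) is twice the order of $\text{LC}_i\text{LC}_j$, which the edge-local-complementation analysis of ref.\ \cite{danielsen2008edge} bounds to give the period $6$ quoted in Section \ref{sec:whichps}.

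The main obstacle is the degenerate case where some reachable $H$ has $\deg_H(i)\le 1$ (or $\deg_H(j)\le 1$), since then $\text{LC}_i(H)=H$ and the $i$-edge at $H$ is a self-loop rather than an edge to a distinct vertex; a similar caveat arises if $\text{LC}_i(H)=\text{LC}_j(H)$. I would handle these by treating fixed points as reflecting endpoints: the transition graph is then a path (or a short $2$-vertex loop) rather than a simple cycle, but the alternating sequence still traces it as one periodic trajectory that runs to an end, is held fixed by the trivial LC, and returns, so the conclusion is unchanged. The key point that rules out the orbit splitting into several disjoint walks is exactly the involution property: each vertex has at most one nontrivial $i$-neighbour and one nontrivial $j$-neighbour, which forces the connected orbit to have the claimed single cyclic (or reflected-path) structure and hence exactly one periodic path through all its members.
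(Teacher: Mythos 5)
Your proof is correct, and it rests on the same essential fact as the paper's---that $\text{LC}_\alpha$ acts on graphs as an involution, $\text{LC}_\alpha \circ \text{LC}_\alpha = \mathds{1}$---but packages it genuinely differently. The paper argues trajectory-by-trajectory: since every reduced word in $\text{LC}_i$ and $\text{LC}_j$ is alternating, there are only two candidate paths (one beginning with $\text{LC}_i$, one with $\text{LC}_j$); each is shown to be periodic by a finiteness-plus-cancellation argument (pairing LCs around the point where the walk stalls), and the two are then shown to be reverses of one another, leaving a single periodic path. You instead characterise the whole orbit at once: the transition graph with one $i$-edge and one $j$-edge at every reachable graph is connected and $2$-regular with a proper $2$-edge-colouring, hence a single colour-alternating cycle (or, when some LC acts trivially, a path traversed with reflecting endpoints). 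What your route buys is a cleaner, more uniform handling of the degenerate cases---a trivially acting LC, or $\text{LC}_i(H)=\text{LC}_j(H)$---which the paper folds somewhat informally into its cancellation step, and it makes the ``two directions of one cycle'' conclusion immediate rather than requiring the separate inversion argument. The only point to state explicitly is that self-loops and parallel edges must be admitted in your transition graph for the $2$-regularity claim to hold literally, which you acknowledge and resolve correctly.
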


\begin{proof}

This is demonstrated independently in terms of edge-local complementation in \cite{danielsen2008edge}, but we provide an alternate proof. Since $\text{LC}_{\alpha}  \circ  \text{LC}_{\alpha} = \mathds{1} $ there are only two ways uniquely apply LCs---alternating \text{LC} on $i$ and $j$, i.e. $\dots \text{LC}_i  \circ  \circ \text{LC}_j$ and $\dots  \text{LC}_i  \circ \text{LC}_LC_j$. This defines two paths through the members of the LC class.

We will now show these paths are periodic. In the following we denote the $k^{th}$ LC operation of one of these trajectories as, $\text{LC}^{k}$, 

Since there are a finite number of graphs equivalent under LC, alternating LCs must reproduce the initial graph, or the path will end after $k-1$ LCs, i.e. when some graph is reached whereby LC has no effect. In this case, $\text{LC}^{k}= \mathds{1}$ and the series of LCs can be written $\ldots \text{LC}^{k+1}_i \circ \text{LC}^{k}_j  \circ \text{LC}^{k-1}_i \circ \ldots \circ \text{LC}^{1}_j = \ldots \text{LC}^{k+1}_i \circ \text{LC}^{k-1}_i \circ \ldots \circ \text{LC}^{1}_j$. Since $\text{LC}_i \circ \text{LC}_i = \mathds{1}$, all LCs can be paired around $\text{LC}^k$ and cancelled, leaving the identity, i.e. the operation is periodic with period $2k-1$.

We will now show that these orbits are the inverse of one another.  Take one of the two orbits (say the one that starts with $j$) and assume it has period $p$, then $\text{LC}^{p}_i \circ \text{LC}^{p-1}_j \circ \ldots \circ \text{LC}^1_j \> (G) = G$. Applying $LC_i$ here we find $\text{LC}^{1'}_i \> G =  \text{LC} ^{1'}_i \circ  \text{LC}^{p}_i  \circ \text{LC}^{p-1}_j  \circ\ldots \circ \text{LC}^1_j \> (G)  = \text{LC}^{p-1}_j \circ \ldots \circ \text{LC}^1_j \> (G) $. Similarly to the above, each operation in the second orbit inverts an operation in the first orbit until we arrive back at the starting graph $G$. Hence, the second path, (begininning with $\text{LC}_i$ is the reverse or the first.

\end{proof}

This implies that only one trajectory need be considered in \textsc{FindAccessibleClasses}. We henceforth always start the orbit with $\text{LC}_j$, which we will denote $\text{LC}^k$ for the $k^{th}$ \text{LC} of an orbit.

We now prove that such an orbit has period at most 6 for all graphs, independent of of the number of vertices.

\begin{lemma}
Repeated application of \text{LC} on vertices $i$ and $j$ on some graph $G$ has period at most 6, independent of the number of vertices. 
\end{lemma}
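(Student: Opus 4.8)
The plan is to work with the adjacency matrix $\Gamma$ of $G$ over $\mathbb{F}_2$ and exploit that $\text{LC}_i$ acts on it by the off-diagonal update $\Gamma \mapsto \Gamma + \Gamma_{\cdot i}\,\Gamma_{i\cdot}$, i.e. it toggles exactly the edges with both endpoints in $N(i)$. Since the previous lemma shows there is a single periodic orbit, I may fix the alternating sequence $\text{LC}_j,\text{LC}_i,\text{LC}_j,\dots$ and only bound its period. First I would record the one genuine invariant: because neither $\text{LC}_i$ nor $\text{LC}_j$ ever toggles an edge incident to $i$ or $j$ (such edges are not \emph{within} a neighbourhood), the bit $e:=\Gamma_{ij}$ is constant along the whole orbit, and $N(i)$ (resp.\ $N(j)$) is itself unchanged by $\text{LC}_i$ (resp.\ $\text{LC}_j$).

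Next I would classify every other vertex $v$ by its type $\tau_v=(\Gamma_{vi},\Gamma_{vj})\in\mathbb{F}_2^2$, giving four classes $A=(1,1)$, $B=(1,0)$, $C=(0,1)$, $D=(0,0)$. A short computation shows $\text{LC}_i$ sends $\tau_v\mapsto(\Gamma_{vi},\Gamma_{vj}+e\,\Gamma_{vi})$ and $\text{LC}_j$ sends $\tau_v\mapsto(\Gamma_{vi}+e\,\Gamma_{vj},\Gamma_{vj})$. When $e=0$ the types are frozen and both maps are commuting involutions (each just toggles edges inside a fixed set, $A\cup B$ for $\text{LC}_i$ and $A\cup C$ for $\text{LC}_j$); the length-$4$ word $(\text{LC}_i\text{LC}_j)^2$ is then the identity, so the period divides $4\le 6$. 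When $e=1$ the maps become the linear transformations $M_i=\begin{psmallmatrix}1&0\\1&1\end{psmallmatrix}$ and $M_j=\begin{psmallmatrix}1&1\\0&1\end{psmallmatrix}$ of $\mathbb{F}_2^2$, swapping $A\leftrightarrow B$ and $A\leftrightarrow C$ respectively. Their product $M_iM_j=\begin{psmallmatrix}1&1\\1&0\end{psmallmatrix}$ has characteristic polynomial $\lambda^2+\lambda+1$ over $\mathbb{F}_2$, hence order $3$ in $GL_2(\mathbb{F}_2)\cong S_3$, so after six alternating complementations every type returns to its original value.

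The remaining, and main, task is to show that the \emph{internal} edges (those among vertices $\neq i,j$) also return after six steps, not merely the types. For this I would track, step by step, which internal edges each $\text{LC}$ toggles, expressed in terms of the \emph{original} classes $A_0,B_0,C_0,D_0$; the evolving types are known explicitly from the powers of $M_i,M_j$. The six toggled sets turn out to be $A_0\cup C_0$, $B_0\cup C_0$, $A_0\cup B_0$, $A_0\cup C_0$, $B_0\cup C_0$, $A_0\cup B_0$, and one then checks that a pair lying in classes $X_0,Y_0$ is toggled an even number of times in all cases (each same-class pair four times; each mixed $A$–$B$, $A$–$C$, $B$–$C$ pair twice; every pair meeting $D_0$ zero times). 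Since every internal edge is flipped evenly, the net internal change is trivial; together with the constancy of $e$ and the order-$3$ return of the types this gives $G_6=G$, and hence period at most $6$, independent of $|V|$.

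I expect the bookkeeping in the last paragraph to be the crux: the toggled set at each step is the \emph{current} neighbourhood $N(i)$ or $N(j)$, which moves as the types permute, so one must carry the explicit orbit of types through all six steps before the parity tally can be made. The two algebraic inputs that make this clean are the invariance of $e$ and the identification of the type action with an order-$3$ element of $GL_2(\mathbb{F}_2)$; once these are in hand, the parity count is a finite, case-free check over the six toggled sets.
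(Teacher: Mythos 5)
Your proof is correct and is essentially the paper's own argument in different notation: your classes $A$, $B$, $C$ are the paper's $\mathcal{Y}$, $\mathcal{X}$, $\mathcal{Z}$, the order-$3$ action of $M_iM_j$ on types is the paper's six-step cycling of those sets, and your parity tally over the six toggled sets $A_0\cup C_0,\,B_0\cup C_0,\,A_0\cup B_0,\ldots$ is exactly the paper's observation that each fixed-set complementation occurs twice in $\text{LC}^1\circ\cdots\circ\text{LC}^6$ and hence cancels. The substantive differences are in your favour---you prove that $e=\Gamma_{ij}$ is invariant and you handle the non-adjacent case $e=0$, both of which the paper tacitly assumes away with ``$i$ and $j$ are always neighbours''---though your parenthetical justification is overstated: $\text{LC}_i$ \emph{can} toggle edges incident to $j$ (that is precisely how the types move when $e=1$), and what you need, and what is true, is only that $\text{LC}_v$ never toggles edges incident to $v$ itself, which already gives the invariance of $e$ and of $N(i)$ under $\text{LC}_i$.
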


\begin{proof}

This is demonstrated independently in terms of edge-local complementation in \cite{danielsen2008edge}, but we provide an alternate proof. Starting with some graph state $G=G^0$ let $G^k$ be the graph state after $k$ \text{LC} s. Next, we define three sets of vertices. 

Firstly, The set of vertices which are in the neighbourhood of $i$, but not in the neighbourhood of $j$ and excluding $j$ and $i$, which we label 
$$
\mathcal{X}^k = \{N_{G^k}(i) - N_{G^k}(j) - \{j\} -\{i\}\}
$$
Secondly, the intersection of vertices which are in the neighbourhood of $i$ and the neighbourhood of $j$. 
$$
\mathcal{Y}^k = \{N_{G^k}(i)\cap N_{G^k}(j)\}
$$
And finally the set of vertices which are in the neighbourhood of $j$, but not in the neighbourhood of $i$ and excluding $i$ and $j$. 
$$
\mathcal{Z}^k = \{N_{G^k}(j) - N_{G^k}(i) - \{i\} - \{j\}\}
$$

Where for clarity we write $\mathcal{X}^0=\mathcal{X}$, $\mathcal{Y}^0=\mathcal{Y}$,  $\mathcal{Z}^0=\mathcal{Z}$. Now we can examine the effect of the \text{LC} orbit on a graph $G=G^0$.

Since $i$ and $j$ are always neighbours, the effect of $\text{LC}_i$ on the $k^{th}$ member of the orbit is to swap the sets $\mathcal{X}^{k}$ and $\mathcal{Y}^{k}$, and the effect of the $\text{LC}_j$  is to swap sets $\mathcal{Y}^{k}$ and $\mathcal{Z}^{k}$, yielding the following, for $k=1,\dots,p$:

\begin{itemize}
\item For odd $k$, $\text{LC}^k$ complements the subgraph induced by the sets $N_{G^k}(j)=\mathcal{Y}^k\cap\mathcal{Z}^k + \{i\}$, setting $\mathcal{Y}^{k+1} = \mathcal{Z}^{k}$ and $\mathcal{Z}^{k+1} = \mathcal{Y}^{k}$. 

\item For even $k$, $\text{LC}^k$ complements the subgraph induced by the sets $=N_{G^k}(i)=\mathcal{X}^k\cap\mathcal{Y}^k + \{j\}$, setting $\mathcal{X}^{k+1} = \mathcal{Y}^{k}$ and $\mathcal{Y}^{k+1} = \mathcal{X}^{k}$. 
\end{itemize}

By repeated application of the above rules, we find:
\begin{equation*}
\begin{split}
\mathcal{X}^6 =     \mathcal{X} \qquad \mathcal{Y}^6 = \mathcal{Y} \qquad \mathcal{Z}^6 = \mathcal{Z} 
\end{split}
\end{equation*}
We have shown the neighbourhoods of $i$ and $j$ have period at most 6. We have yet to show that the edges not involving $i$ or $j$, have undergone one period, which will will do now.

We write the complementation of a graph as an operation on a graph $C : G \rightarrow G^\mathsf{c}$. Further, we denote the complementation of subgraph induced by a set of vertices, $\mathcal{A}$ as $C_\mathcal{A} : G \rightarrow G_{\mathcal{A}^\mathsf{c}}$, where $G_{\mathcal{A}^\mathsf{c}}$ is the input graph $G$ but with the subgraph induced by the vertex set $\mathcal{A}$ complemented. We define $E[A,B]$ as the set of all bipartite edges in $E$ that run from a vertex in the set $A$ to a vertex in the set $B$, $E[\mathcal{A},\mathcal{B}] = \{(a,b)\in E \; : \; a\in\mathcal{A}, \; b\in\mathcal{B}, \; a \neq b \}$. 

Also, we use $C_{E[A , B]} : G \rightarrow G_{E[A,B]^\mathsf{c}}$ do denote bipartite complementation. That is, $G_{E[A,B]^\mathsf{c}}$ is the input graph $G$ but with the bipartite component between $\mathcal{A}$ and $\mathcal{B}$ complemented.

For odd $k$, $\text{LC}^k$ performs the operation $C_{N_{G^k}(j)} = C_{\mathcal{Y}^k\cup\mathcal{Z}^k+\{i\}}$, whilst for even $k$, $\text{LC}^k$ performs the operation $C_{N_{G^k}(i)} = C_{\mathcal{X}^k\cup\mathcal{Y}^k+\{j\}}$. To check the effect of the orbit on the edges of $G^k$ we examine the graph operations in terms of fixed sets of vertices, namely $\mathcal{X}$, $\mathcal{Y}$, $\mathcal{Z}$, using the relations above. Hence the successive \text{LC} operations can be written in the following way:
\begin{equation*}
\text{LC}^1 \;=\;  C_{\mathcal{Y}\cup\mathcal{Z} \cup \{i\}} , \qquad \; \text{LC}^2 \;=\;  C_{\mathcal{X}\cup\mathcal{Z} \cup \{j\}} ,\qquad \ldots\qquad , \qquad \text{etc.}
\end{equation*}
Continuing to apply the rules, we find $\text{LC}^l = \text{LC}^{l+6}$. Hence after 12 successive LCs each complementation $\text{LC}^l$ has cancelled with $\text{LC}^{l+6}$ (since complementation of a fixed set of vertices commutes) and the graph has undergone one period.

Note $C_{\mathcal{A} \cup \mathcal{B}}  = C_{\mathcal{A}}  \;\circ\; C_{\mathcal{B}}  \;\circ\; C_{E[A , B]}$. Using this expansion, the first six operations of the \text{LC} orbit can be written 
\begin{equation*}
\begin{split}
\text{LC}^1 \;=\; & C_{\mathcal{Y}}  \;\circ\;  C_{\mathcal{Z}}  \;\circ\;   C_{E[\mathcal{Y} , \{i\}]}   \;\circ\;  C_{E[\mathcal{Z} , \{i\}]}   \;\circ\;  C_{E[\mathcal{Y} , \mathcal{Z} ]} \\
\text{LC}^2 \;=\; & C_{\mathcal{X}}  \;\circ\;  C_{\mathcal{Z}}  \;\circ\;   C_{E[\mathcal{X} , \{j\}]}   \;\circ\;  C_{E[\mathcal{Z} , \{j\}]}   \;\circ\;  C_{E[\mathcal{X} , \mathcal{Z} ]} \\
\text{LC}^3 \;=\; & C_{\mathcal{X}}  \;\circ\;  C_{\mathcal{Y}}  \;\circ\;   C_{E[\mathcal{X} , \{i\}]}   \;\circ\;  C_{E[\mathcal{Y} , \{i\}]}   \;\circ\;  C_{E[\mathcal{X} , \mathcal{Y} ]} \\
\text{LC}^4 \;=\; & C_{\mathcal{Z}}  \;\circ\;  C_{\mathcal{Y}}  \;\circ\;   C_{E[\mathcal{Z} , \{j\}]}   \;\circ\;  C_{E[\mathcal{Y} , \{j\}]}   \;\circ\;  C_{E[\mathcal{Z} , \mathcal{Y} ]} \\
\text{LC}^5 \;=\; & C_{\mathcal{Z}}  \;\circ\;  C_{\mathcal{X}}  \;\circ\;   C_{E[\mathcal{Z} , \{i\}]}   \;\circ\;  C_{E[\mathcal{X} , \{i\}]}   \;\circ\;  C_{E[\mathcal{Z} , \mathcal{X} ]} \\
\text{LC}^6 \;=\; & C_{\mathcal{Y}}  \;\circ\;  C_{\mathcal{X}}  \;\circ\;   C_{E[\mathcal{Y} , \{j\}]}   \;\circ\;  C_{E[\mathcal{X} , \{j\}]}   \;\circ\;  C_{E[\mathcal{Y} , \mathcal{X} ]} 
\end{split}
\end{equation*}
Noting $C_{E[\mathcal{B} , \mathcal{A} ]}  = C_{E[\mathcal{A} , \mathcal{B} ]}$ (arguments of bipartite edges commute), $[C_\mathcal{A},C_\mathcal{B}] = 0$ (complementations of a fixed set of vertices commute with one another) and $C_\mathcal{A} \circ C_\mathcal{A} =  \mathds{1} $ (complementation is self-inverse), we find $\text{LC}^1 \circ \text{LC}^2 \circ \text{LC}^3 \circ \text{LC}^4 \circ \text{LC}^5 \circ \text{LC}^6 = \mathds{1}$ since each complementation is performed twice. 	Hence \text{LC} operations on only two vertices have period six.

\end{proof}

\clearpage

\subsection{\textsc{FindAccessibleClasses} Algorithm}

\begin{algorithm}[h!]
\DontPrintSemicolon
\SetAlgoLined
\KwData{A resource graph $G(E,V)$ of order $n$

\qquad \: \! \! A function generating graphs of allowed qubit interactions, $t \in T$\;
\qquad \: \! \! The set of sets of entanglement classes of order $n$, $S^n$\;
\qquad \: \! \! Convergence criteria, $d$, the proportion of iterations to be run since the last novel class was found\;
}

\KwResult{Outputs the indices of LU classes of that are accessible with a given resource, $G(E,V)$, using only postselected fusion and $\text{CZ}^{LO}$ gates, as well as a recipes for each.}
$\mathcal{L}_R', H^R ,P^R   \leftarrow  \emptyset$ \tcp*[r]{H will be the recipe for each class, indexed by class} 
$j  \leftarrow 0$                           \tcp*[r]{L' are the classes which can be accessed} 

$G \leftarrow R$\;
\While(\tcp*[f]{Stop when convergence ratio is reached}){$c < d \cdot j$}{         
$p \leftarrow 1$ 									\tcp*[r]{Success probability of this this experiment}     
 $j  \leftarrow  j+1$                \tcp*[r]{Number of iterations done so far } 
 $t(E,V) \leftarrow \textsc{RandMember}(T)$    \tcp*[r]{Choose an ordered gate topology from the allowed set T} 
 $L \leftarrow \textsc{RandInt}(n-1 - |E[G]|, |E[t]|) $              \tcp*[r]{L is the number of gates to perform from the tree} 
 \For{$i \leftarrow$ 1 \KwTo $L$}{
  
  $r \leftarrow  \textsc{RandMember}(E[t])$                  \tcp*[r]{Choose the next edge from the gate topology} 
  $g_r \leftarrow \textsc{RandMember}(\{F_r, \text{CZ}^{LO}_r\})$     \tcp*[r]{we will apply either CZ or fusion along the chosen edge} 

  Apply $g_{r}$ to $G(E,V)$                               \tcp*[r]{Apply the gate} 
  Append ``$g_r$'' to $h$                                 \tcp*[r]{Save  what we have done to the recipe index} 

  \uIf{$g=\text{CZ}^{LO}$}{
    $p \leftarrow p\times (1/9)$                                 \tcp*[r]{Keep track of success probability} 
   $m \leftarrow \textsc{RandInt}(0,5)$                        \tcp*[r]{LC periodic with period 6 for two vertices} 
 \For{$k \leftarrow$ 1 \KwTo $m$}{
  $\alpha \leftarrow r_{(1+k \; (\text{mod} 2))}$         \tcp*[r]{Want to apply LC alernatingly to i, j, i, etc.} 
  $G(E,V) \leftarrow  \text{LC} _{\alpha}(G(E,V))$        \tcp*[r]{Apply LC} 
  Append ``$\text{LC}_\alpha$'' to $h$                    \tcp*[r]{Save what has been done to the recipe} 
}}

  \uIf{$g=F$}{
      $p \leftarrow p\times (1/9)$                                 \tcp*[r]{Keep track of success probability} 
   $m \leftarrow \text{rand}(0,14)$                      \tcp*[r]{the Width of largest LC class is 14}              
 \For{$k \leftarrow$ 1 \KwTo $m$}{                                    
  $\alpha \leftarrow \textsc{RandMember}(V_G [N_G(i) \cup N_G(i) + \{i\} + \{j\}])$     \tcp*[r]{LC does not commute with F} 
  $G(E,V) \leftarrow  \text{LC} _{\alpha}(G(E,V))$                    \tcp*[r]{Hence need to LC whole union of neighbourhoods of i, j} 
  Append ``$\text{LC}_\alpha$'' to $h$                                \tcp*[r]{Save what has been done to the recipe} 
}

	}

    }
  \uIf(\tcp*[f]{Did we find a new class of graph?}){$G(E,V)$ is not equivalent to any graph in classes $\mathcal{L}_R'$}
   {
	Append $i$ to $\mathcal{L}_R'$ where $S^n_i$ is the equivalence class of $G(E,V)$       \tcp*[r]{Save which class we accessed} 
	$H^R_j \leftarrow h$                                                                      \tcp*[r]{Save successful recipe} 
	$P^R_j \leftarrow p$                                                                      \tcp*[r]{Save success probability} 
	$c \leftarrow j$                                                                        \tcp*[r]{Keep track of convergence criteria} 
	}
	
	\uElseIf{$p > P^R_j$ and $G(E,V)$ $\mathcal{L}_{R'_j}$  \text{is equivalent to} $G(E,V)$ $\mathcal{L}_{R'_j}$ }{
	Replace $H^R_j$ with $h$                                                                  \tcp*[r]{Save imrpoved recipe} 
	
	Replace $P^R_j$ with $p$                                                                        \tcp*[r]{Save imrpoved probability} 
}
  } 

Return $\{ \mathcal{L}_R ', H \}$\;

\caption{\textsc{FindAccessibleClasses}}
\end{algorithm}


\begin{figure*}[t!]
\centering
\captionsetup{width=0.90\textwidth}
\includegraphics[width=1.0\textwidth]{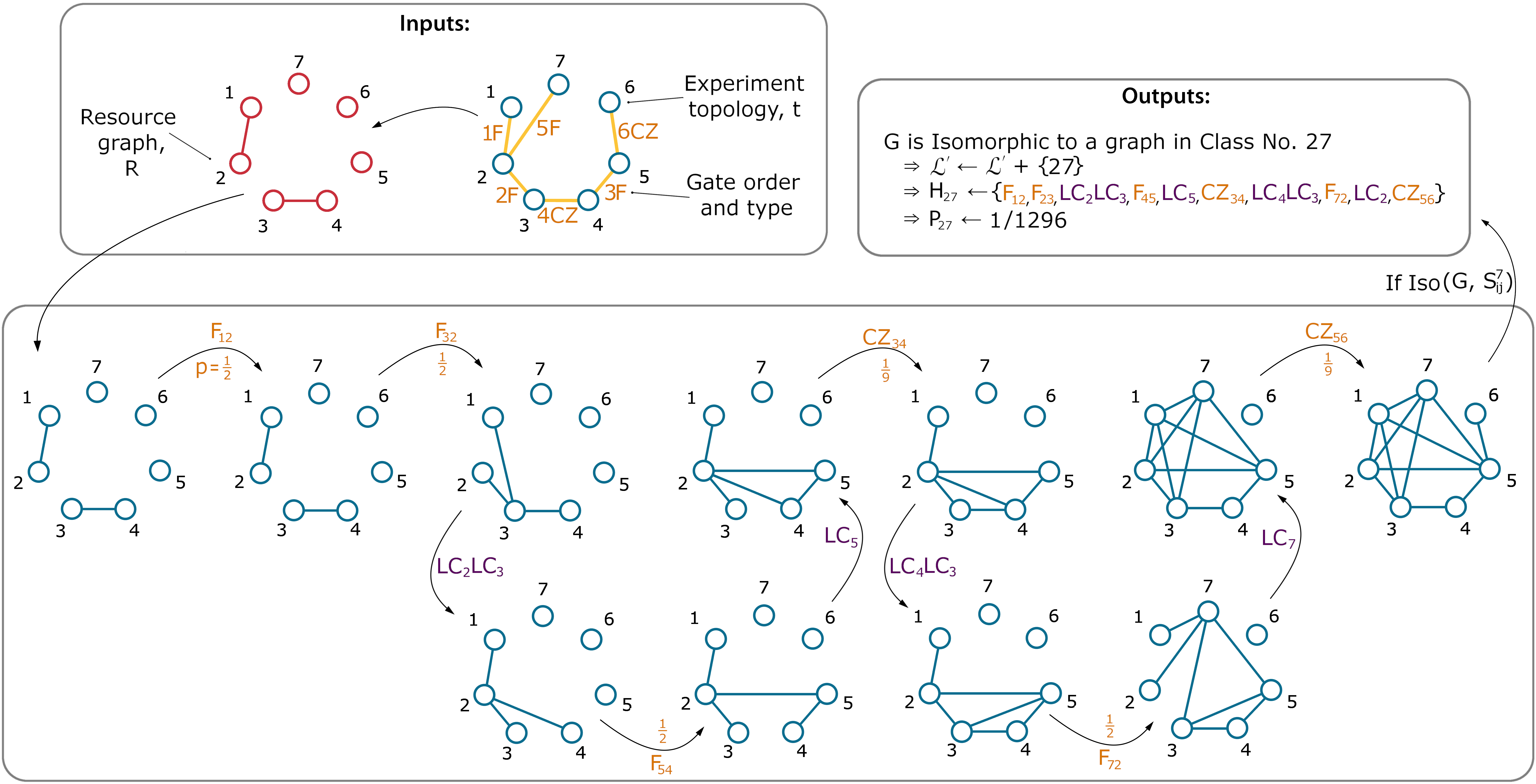}

\caption{An alternate example of one Monte Carlo iteration of \textsc{FindAccessibleClasses}. Starting from a given resource state, in this case 2 EPPs and 3 single photons, operations for experiment topology are performed in order, interspersed with relevant \text{LC}s on the previously acted upon vertices. Pairs of qubits populated with photons from EPPs are highlighted in pink. If the resulting graph is not isomorphic to any graph found thus far, the entanglement class of the graph $i$ is saved to a set $\mathcal{L}_R'$. After many runs, $\mathcal{L}_R'\approx \mathcal{L}_R$.}
\label{alg2}
\end{figure*}

\vspace{3cm}

\clearpage

\subsection{Enumeration of Graph States}

\begin{figure*}[!ht]
\centering
\captionsetup{width=0.90\textwidth}
\includegraphics[width=1.0\textwidth]{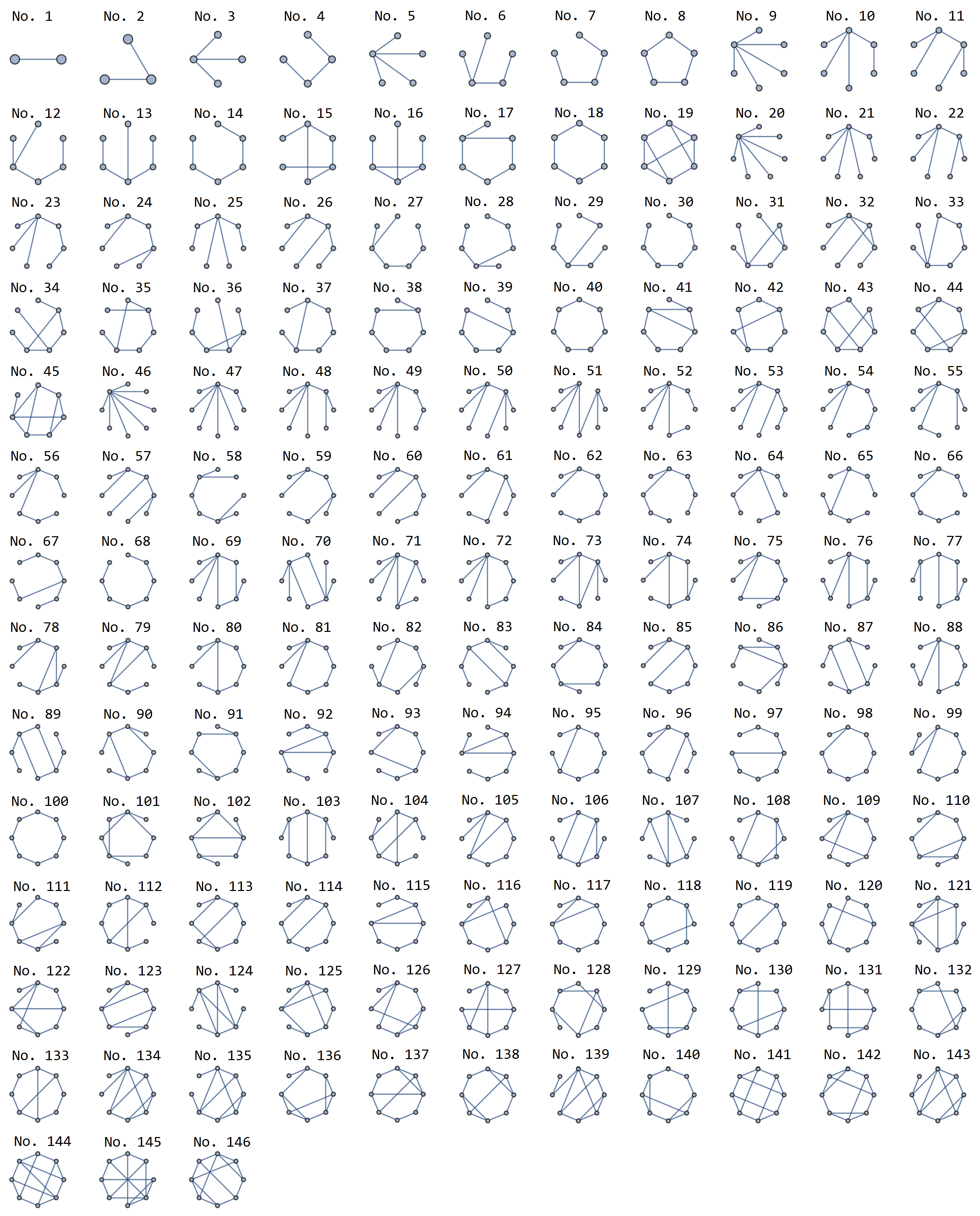}

\caption{Minimal edge count representatives from each of the LU classes up to 8 qubits, canonically numbered as in ref. \cite{cabello2011optimal, hein2004multiparty, danielsen2006classification}.}
\label{bigtable}
\end{figure*}

\subsection{Example of postselected gate - \text{CZ}}
The postselected $\text{CZ}^{LO}$ acting on $\ket{\text{++}}$ produces the following state:
\begin{equation*}
\begin{split}
\text{CZ}^{LO}\ket{++}= &\frac{1}{3}\ket{101000}_f + \frac{1}{3\sqrt{2}}\ket{100100}_f + \frac{1}{3\sqrt{2}}\ket{100010}_f \\ 
+ & \frac{1}{3} \ket{100001}_f +  \frac{1}{3\sqrt{2}} \ket{011000}_f +\quad  \;  \frac{1}{6}\ket{010100}_f   \\
+ & \frac{1}{6}\ket{010010}_f  + \frac{1}{3\sqrt{2}}\ket{010001}_f  +\quad  \;  \frac{1}{6}\ket{001100}_f \\
- & \frac{1}{6}\ket{001010}_f   -\frac{1}{3\sqrt{2}}\ket{001001}_f  -\frac{1}{3\sqrt{2}}\ket{000110}_f\\
+ & \frac{1}{3} \ket{000101}_f -\quad  \;  \; \frac{1}{3} \ket{002000}_f  +\quad  \; \frac{1}{3}\ket{000200}_f
\end{split}
\end{equation*}
Where the non-qubit terms in
\begin{equation*}
\begin{split}
\mathcal{J}  =  \text{span}( &\ket{101000}_f,\ket{100100}_f,  \ket{100010}_f, \\ 
& \ket{100001}_f,\ket{011000}_f, \ket{010001}_f,\\ 
& \ket{001001}_f,\ket{000110}_f, \ket{000101}_f,\\ 
&\qquad \: \: \: \: \qquad \ket{002000}_f,\ket{000200}_f )
\end{split}
\end{equation*} 
are removed by postselection $P_\mathcal{Q}$. Hence
\begin{equation*}
\begin{split}
P_{\mathcal{Q}} \text{CZ}^{LO} \ket{++} = & \frac{1}{6}\ket{010100}_f +\frac{1}{6}\ket{010010}_f \\
                                        + & \frac{1}{6}\ket{001100}_f -\frac{1}{6}\ket{001010}_f \\
                                        = & \frac{1}{6}\ket{00} + \frac{1}{6}\ket{01} + \frac{1}{6}\ket{10} - \frac{1}{6}\ket{11}  \\
                                        = & \frac{1}{3} \text{CZ} \ket{++}
\end{split}
\end{equation*}

\subsection{Postselection of multiple entangled pairs from squeezed vacuum}

Pairwise entangled states of $n$ pairs of photons are commonly generated by two postselecting the $n$-photon subspace of $\frac{n}{2}$ coherently pumped EPP source. Unfortunately, these states contain junk states affect postselectability. This is because it is not possible to distinguish the case where there were one pair of photons is generated in each source, and the case where when some sources produce more than one pair, which is at least as likely (for $n$ photons in total). To see this, take two (unnormalised) fock states produced by a EPP source.

\begin{equation*}
\begin{split}
\ket{\xi}^{\otimes2} = (\ket{0000}  +\gamma&\ket{1010}  + \; \; \gamma\ket{0101} + \\
\gamma^2 & \ket{2020} + \gamma^2 \ket{0202}  + \dots)^{\otimes2}
\end{split}
\end{equation*}
Where $\ket{\Phi^+} = \ket{1010}_f +\ket{0101}_f$ for two pairs of two modes comprising two qubits. The $O(\gamma^2)$, four-photon terms:
\begin{equation*}
\begin{split}
\ket{\xi}^{\otimes2} = &\gamma^2\ket{\Phi^+ \Phi^+} + \gamma^2 \ket{11110000}  + \gamma^2 \ket{20200000}  \\
+ &\gamma^2 \ket{02020000}  + \gamma^2 \ket{00001111} + \gamma^2 \ket{00002020} \\
+ &\gamma^2 \ket{00000202}
\end{split}
\end{equation*}
The postselected state, $\ket{\Phi^+ \Phi^+}$ makes up only a minority of the four photon state, and two-photon-per-qubit terms dominate. Similarly, for larger ensembles of sources, each permutation of pairs being produced the sources is present in the superposition, and must be considered. For example, in the six photon subspace of three sources, there are terms where all three pairs were produced in just one source, as well as terms where just one source produced an extra pair. Postselected sources produce mostly junk states, and how these traverse and experiment must be considered when evaluating whether an experiment will successfully postselect.
\clearpage

\end{document}